\documentclass[sigplan,9pt,review=false]{acmart}\settopmatter{printfolios=true,printccs=true,printacmref=false}

\usepackage[utf8]{inputenc}
\usepackage[T1]{fontenc}
\input{glyphtounicode}


\acmYear{2018}
\acmConference[LICS '18]{LICS '18: 33rd Annual ACM/IEEE Symposium on
Logic in Computer Science}{July 9--12, 2018}{Oxford, United Kingdom}
\acmBooktitle{LICS '18: LICS '18: 33rd Annual ACM/IEEE Symposium on
Logic in Computer Science, July 9--12, 2018, Oxford, United Kingdom}
\acmPrice{15.00}
\acmDOI{10.1145/3209108.3209161}
\acmISBN{978-1-4503-5583-4/18/07}


\setcopyright{none}
\copyrightyear{2018}

\bibliographystyle{ACM-Reference-Format}


\usepackage{subcaption} 
\usepackage{graphicx}
\usepackage{amssymb,amsmath,mathtools}
\usepackage{epstopdf}
\usepackage{url}
\usepackage{csquotes}
\usepackage{wasysym}
\usepackage{tabularx}
\usepackage{float}
\usepackage{placeins}
\usepackage{cancel}
\usepackage{hyperref}
\usepackage{multirow}
\usepackage{tikz}
\usetikzlibrary{automata,positioning}

\usepackage{thm-restate}
\usepackage{textcomp}
\usepackage{xspace}
\usepackage{xcolor}
\usepackage{booktabs}
\usepackage{colortbl}

\tikzset{
    state/.style={
		        rectangle,
            rounded corners,
            draw=black,
            minimum height=2em,
            minimum width=2em,
            align=center,
            }
}

\tikzset{
    statep/.style={
            circle,
            draw=black,
            minimum height=2em,
            minimum width=2em,
            align=center,
            }
}

\tikzstyle{acc}=[double]

\DeclareGraphicsRule{.tif}{png}{.png}{`convert #1 `dirname #1`/`basename #1 .tif`.png}



\newcommand{\true}{{\ensuremath{\mathbf{tt}}}}
\newcommand{\false}{{\ensuremath{\mathbf{ff}}}}
\newcommand{\F}{{\ensuremath{\mathbf{F}}}}
\renewcommand{\G}{{\ensuremath{\mathbf{G}}}}
\newcommand{\X}{{\ensuremath{\mathbf{X}}}}
\renewcommand{\U}{{\ensuremath{\mathbf{U}}}}
\newcommand{\W}{{\ensuremath{\mathbf{W}}}}
\newcommand{\M}{{\ensuremath{\mathbf{M}}}}
\newcommand{\R}{{\ensuremath{\mathbf{R}}}}


\newcommand{\subf}{{\it sf}}

\newcommand{\sfmu}{{\ensuremath{\mathbb{\mu}}}}
\newcommand{\sfnu}{{\ensuremath{\mathbb{\nu}}}}
\newcommand{\setF}{\ensuremath{\mathcal{F}}}
\newcommand{\setG}{\ensuremath{\mathcal{G}}}
\newcommand{\setFG}{\ensuremath{\mathcal{F\hspace{-0.1em}G}}}
\newcommand{\setGF}{\ensuremath{\mathcal{G\hspace{-0.1em}F}\!}}


\newcommand{\aft}{\textit{af}}
\newcommand{\daft}{\textit{af}^{~\vee}}

\newcommand{\Reach}{\textit{Reach}}
\newcommand{\DReach}{\textit{Reach}^{\vee}}

\newcommand{\dnf}{\textit{dnf}}

\newcommand{\evalnu}[2]{{\ensuremath{#1[#2]_{\nu}}}}
\newcommand{\evalmu}[2]{{\ensuremath{#1[#2]_{\mu}}}}


\newcommand{\ltlfum}{\mu{LTL}}
\newcommand{\ltlgrw}{\nu{LTL}}


\newcommand{\fin}{\textit{fin}\,}
\renewcommand{\inf}{\textit{inf}\,}
\newcommand{\trans}[1]{\overset{#1}{\longrightarrow}}
\newcommand{\qee}{\hfill$\triangle$} 

\begin{document}

\title[A Unified Translation of LTL into $\omega$-Automata]{One Theorem to Rule Them All:\\A Unified Translation of LTL into $\omega$-Automata}
\titlenote{This work was partially funded and supported by the Czech Science Foundation, grant No.~\mbox{P202/12/G061}, and the German Research Foundation (DFG) project \enquote{Verified Model Checkers} (317422601).}


\author{Javier Esparza}    
\email{esparza@in.tum.de}             

\affiliation{
  \institution{Technische Universität München} 
  \country{Germany}                 
}

\author{Jan K\v{r}et{\'i}nsk{\'y}}         
\email{jan.kretinsky@in.tum.de}
\orcid{0000-0002-8122-2881}           

\affiliation{
  \institution{Technische Universität München} 
  \country{Germany}                 
}

\author{Salomon Sickert}
\email{sickert@in.tum.de}               
\orcid{0000-0002-0280-8981}            

\affiliation{
  \institution{Technische Universität München} 
  \country{Germany}                 
}

\begin{abstract}
We present a unified translation of LTL formulas into deterministic Rabin automata, limit-deterministic Büchi automata, and nondeterministic Büchi automata. The translations yield automata of asymptotically optimal size (double or single exponential, respectively). All three translations are derived from one single Master Theorem of purely logical nature. The Master Theorem decomposes the language of a formula into a positive boolean combination of languages that can be translated into $\omega$-automata by elementary means. In particular, Safra's, ranking, and breakpoint constructions used in other translations are not needed. 
\end{abstract}

\begin{CCSXML}
<ccs2012>
<concept>
<concept_id>10003752.10003766.10003770</concept_id>
<concept_desc>Theory of computation~Automata over infinite objects</concept_desc>
<concept_significance>500</concept_significance>
</concept>
<concept>
<concept_id>10003752.10003790.10003793</concept_id>
<concept_desc>Theory of computation~Modal and temporal logics</concept_desc>
<concept_significance>500</concept_significance>
</concept>
</ccs2012>
\end{CCSXML}

\ccsdesc[500]{Theory of computation~Automata over infinite objects}
\ccsdesc[500]{Theory of computation~Modal and temporal logics}

\keywords{Linear temporal logic, Automata over infinite words, Deterministic automata, Non-deterministic automata}  

\maketitle


\newcommand{\para}[1]{\smallskip\noindent\textbf{#1}}

\section{Introduction}

Linear temporal logic (LTL) \cite{DBLP:conf/focs/Pnueli77} is a prominent specification language, used both for model checking and automatic synthesis of systems.
In the standard automata-theoretic approach \cite{DBLP:conf/lics/VardiW86} the input formula is first translated into an $\omega$-automaton, and then the product of this automaton with the input system is further analyzed.
Since the size of the product is often the bottleneck of all the verification algorithms, it is crucial that the $\omega$-automaton is as small as possible. 
Consequently, a lot of effort has been spent on translating LTL into small automata, e.g. \cite{DBLP:conf/fm/Couvreur99,DBLP:conf/cav/DanieleGV99,DBLP:conf/concur/EtessamiH00,DBLP:conf/cav/SomenziB00,DBLP:conf/cav/GastinO01,DBLP:conf/forte/GiannakopoulouL02,DBLP:conf/wia/Fritz03,DBLP:conf/tacas/BabiakKRS12,DBLP:conf/atva/Duret-LutzLFMRX16}.

While non-deterministic B\"uchi automata (NBA) can be used for model checking non-deterministic systems, other applications such as model checking probabilistic systems or synthesis usually require automata with a certain degree of determinism, such as deterministic parity automata (DPA) or deterministic Rabin automata (DRA) \cite{DBLP:books/daglib/0020348}, deterministic generalized Rabin automata (DGRA) \cite{DBLP:conf/cav/ChatterjeeGK13}, limit-deterministic (or semi-deterministic) B\"uchi automata (LDBA) \cite{DBLP:conf/focs/Vardi85,DBLP:journals/jacm/CourcoubetisY95,DBLP:conf/concur/HahnLST015,DBLP:conf/cav/SickertEJK16}, unambiguous B\"uchi automata \cite{DBLP:conf/cav/BaierK0K0W16} etc.
The usual constructions that produce such automata are based on Safra's determinization and its variants \cite{DBLP:conf/focs/Safra88,DBLP:conf/lics/Piterman06,DBLP:conf/fossacs/Schewe09}.
However, they are known to be difficult to implement efficiently, and to be practically inefficient in many cases due to their generality.
Therefore, a recent line of work shows how DPA \cite{DBLP:conf/tacas/EsparzaKRS17,DBLP:conf/tacas/KretinskyMWW17}, DRA and DGRA \cite{DBLP:conf/cav/KretinskyE12,DBLP:conf/atva/KretinskyL13,DBLP:conf/cav/EsparzaK14,DBLP:journals/fmsd/EsparzaKS16}, or LDBA \cite{DBLP:conf/tacas/KiniV15,DBLP:conf/cav/SickertEJK16,DBLP:conf/tacas/Kini017} can be produced directly from LTL, without the intermediate step through a non-deterministic automaton. 
All these works share the principle of describing each state by a collection of formulas, as happens in the classical tableaux construction for translation of LTL into NBA. This makes the approach particularly apt for semantic-based state reductions, e.g., for merging states corresponding to equivalent formulas. These reductions cannot be applied to Safra-based constructions, where this semantic structure gets lost.

In this paper, we provide a unified view of translations of LTL into NBA, LDBA, and DRA enjoying the following properties, absent in former translations: 

\paragraph{Asymptotic Optimality.} 
D(G)RA are the most compact among the deterministic automata used in practice, in particular compared to DPA.
Previous translations to D(G)RA were either limited to fragments of LTL \cite{DBLP:conf/cav/KretinskyE12,DBLP:conf/atva/KretinskyL13,DBLP:conf/atva/BabiakBKS13}, or only shown to be triply exponential \cite{DBLP:conf/cav/EsparzaK14,DBLP:journals/fmsd/EsparzaKS16}. 
Here we provide constructions for all mentioned types of automata matching the optimal double exponential bound for DRA and LDBA, and the optimal single exponential bound for NBA.

\paragraph{Symmetry.}
The first translations \cite{DBLP:conf/cav/KretinskyE12,DBLP:conf/atva/KretinskyL13} used auxiliary automata to monitor each \emph{Future}- and \emph{Globally}-subformula.
While this approach worked for fragments of LTL, subsequent constructions for full LTL \cite{DBLP:conf/cav/EsparzaK14,DBLP:journals/fmsd/EsparzaKS16,DBLP:conf/cav/SickertEJK16} could not preserve the symmetric treatment. They only used auxiliary automata for $\G$-subformulas, at the price of more complex constructions. Our translation re-establishes the symmetry of the first constructions. It treats $\F$ and $\G$ equally (actually, and more generally, it treats each operator and its dual equally), which results into simpler auxiliary automata.

\paragraph{Independence of Syntax.}
Previous translations were quite sensitive to the operators used in the syntax of LTL.
In particular, the only greatest-fixed-point operator they allowed was  \emph{Globally}. Since formulas also had to be
in negation normal form, pre-processing of the input often led to unnecessarily large formulas. While our translations still requires negation normal form, 
it allows for direct treatment of \emph{Release}, \emph{Weak until}, and other operators.

\paragraph{Unified View.} 
Our translations rely on a novel \emph{Master Theorem}, which decomposes the language of a formula into a positive boolean combination of ``simple'' languages, in the sense that they are 
easy to translate into automata. 
This approach is arguably simpler than previous ones (it is certainly simpler than our previous papers \cite{DBLP:journals/fmsd/EsparzaKS16,DBLP:conf/cav/SickertEJK16}). 
Besides, it provides a unified treatment of DRA, NBA, and LDBA, differing only in the translations of the ``simple'' languages. The automaton for the formula is obtained from the
automata for the ``simple'' languages by means of standard operations for closure under union and intersection.  

\smallskip

On top of its theoretical advantages, our translation is comparable to previous DRA translations in practice, even without major optimizations. Summarizing, we think this paper finally achieves the goals formulated in \cite{DBLP:conf/cav/KretinskyE12}, where the first translation of this kind---valid only for what we would now call a small fragment of LTL---was presented. 

\paragraph{Structure of the Paper.} 
Section \ref{sec:prelim} contains preliminaries about LTL and $\omega$-automata. Section \ref{sec:after} introduces some definitions and results of \cite{DBLP:journals/fmsd/EsparzaKS16,DBLP:conf/cav/SickertEJK16}. Section \ref{sec:fragments} shows how to use these notions to translate four simple fragments of LTL into deterministic Büchi and coBüchi automata; these translations are later used as building blocks. Section \ref{sec:master} presents our main result, the Master Theorem. Sections \ref{sec:ltl2dra}, \ref{sec:ltl2nba}, and \ref{sec:ltl2ldba} apply the Master Theorem to derive translations of LTL into DRA, NBA, and LDBA, respectively. Section \ref{sec:discussion} compares the paper to related work and puts the obtained results into context. The appendix of the accompanying technical report \cite{arXiv} contains the few omitted proofs and further related material.

\section{Preliminaries}
\label{sec:prelim}

\subsection{$\omega$-Languages and $\omega$-Automata}

Let $\Sigma$ be a finite alphabet. 
An $\omega$-word $w$ over $\Sigma$ is an infinite sequence of letters $w[0] w[1] w[2] \dots$. We denote the finite infix
$w[i]w[i+1]\cdots w[j - 1]$ by $w_{ij}$, and the infinite suffix $w[i] w[i+1] \dots$ by $w_{i}$. An $\omega$-language is a set of $\omega$-words.

For the sake of presentation, we introduce $\omega$-automata with accepting conditions defined on states. However, all results can be restated with accepting conditions defined on transitions, more in line with other recent papers and tools \cite{DBLP:conf/atva/Duret-LutzLFMRX16,DBLP:conf/atva/KomarkovaK14,DBLP:conf/cav/BabiakBDKKM0S15}.

Let $\Sigma$ be a finite alphabet. A \emph{nondeterministic pre-automaton} over $\Sigma$ is a tuple $\mathcal{P} = (Q, \Delta, Q_0)$ where $Q$ is a finite set of states, $\Delta \colon  Q \times \Sigma \rightarrow 2^Q$ is a transition function, and $Q_0$ is a set of initial states. A transition is a triple $(q, a, q')$ such that $q' \in \Delta(q, a)$. A pre-automaton $\mathcal{P}$ is deterministic if $Q_0$ is a singleton and $\Delta(q, a)$ is a singleton for every $q \in Q$ and $a \in \Sigma$.

A \emph{run} of $\mathcal{P}$ on an $\omega$-word $w$ is an infinite sequence of states $r = q_0q_1q_2\dots$ with $q_{i+1} \in \delta(q_i, w[i])$ for all $i$  and we denote by $\inf(r)$ the set of states occurring infinitely often in $r$.  An \emph{accepting condition} is an expression over the syntax
$\alpha :: = \inf(S) \mid \fin(S) \mid \alpha_1 \vee \alpha_2 \mid \alpha_1 \wedge \alpha_2$ with $S \subseteq Q$.
Accepting conditions are evaluated on runs and the evaluation relation $r \models \alpha$ is defined as follows:
\begin{center}
\begin{tabular}{rclcl}
$r$  &$ \models$ & $\inf(S)$  & \ if{}f  \ & $\inf(r) \cap S \neq \emptyset$ \\
$r$  & $\models$ & $\fin(S)$  & if{}f  & $\inf(r) \cap S = \emptyset$ \\
$r$  & $\models$ & $\alpha_1 \vee \alpha_2$  & if{}f & $r \models \alpha_1$ or $r \models \alpha_2$ \\ 
$r$  & $\models$ & $\alpha_1 \wedge \alpha_2$  & if{}f & $r \models \alpha_1$ and $r \models \alpha_2$ \\
\end{tabular}
\end{center}
An accepting condition $\alpha$ is a 
\begin{itemize}
\item Büchi condition if $\alpha = \inf(S)$ for some set $S$ of states.
\item coBüchi condition if $\alpha = \fin(S)$ for some set $S$ of states.
\item Rabin condition if $\alpha = \bigvee_{i=1}^k (\inf(I_i) \wedge \fin(F_i))$ for some $k\geq 1$ and some sets $I_1, F_1, \ldots, I_k, F_k$ of states.
\end{itemize}

An $\omega$-automaton over $\Sigma$ is a tuple $\mathcal{A} = (Q, \Delta, Q_0, \alpha)$ where 
$(Q, \Delta, Q_0)$ is a pre-automaton over $\Sigma$ and $\alpha$ is an accepting condition. 
A run $r$ of $\mathcal{A}$ is \emph{accepting} if $r \models \alpha$. A word $w$ is accepted by $\mathcal{A}$ if some run 
of $\mathcal{A}$ on $w$ is accepting.
An $\omega$-automaton is a Büchi (coBüchi, Rabin) automaton if its accepting condition is a Büchi (coBüchi, Rabin) condition.

\paragraph{Limit-Deterministic Büchi Automata.}

Intuitively, a NBA is limit-deterministic if it can be split into a non-deterministic component without accepting states, and a deterministic component. The automaton can only accept by ``jumping'' from the non-deterministic to the deterministic component, but after the jump
it must stay in the deterministic component forever. Formally, 
a NBA $\mathcal{B} = (Q, \Delta, Q_0, \alpha)$ is {\em limit-deterministic} (LDBA) if
$Q$ can be partitioned into two disjoint sets $Q = Q_\mathcal{N} \uplus Q_\mathcal{D}$, s.t.
\begin{enumerate}
  \item $\Delta(q, \nu) \subseteq Q_\mathcal{D}$ and $|\Delta(q, \nu)| = 1$ for every $q \in  Q_\mathcal{D}$, $\nu \in \Sigma$, and
  \item $S \subseteq Q_\mathcal{D}$ for all $S \in \alpha$.
\end{enumerate}

\subsection{Linear Temporal Logic}
We work with a syntax for LTL in which formulas are written in negation-normal form, i.e., negations only occur in front of atomic propositions. For every temporal operator we also include in the syntax its dual operator. On top of the next operator $\X$, which is self-dual, we introduce temporal operators  $\F$ (eventually), $\U$ (until), and $\W$ (weak until), and their duals $\G$ (always), $\R$ (release) and $\M$ (strong release). The syntax may look redundant but as we shall see it is essential to include $\W$ and $\M$ and very convenient to include $\F$ and $\G$.

\paragraph{Syntax and semantics of LTL.}
A formula of LTL in {\em negation normal form} over a set of atomic propositions ($Ap$) is given by the syntax: \vspace{-1em}
\begin{align*}
\varphi::= & \; \true \mid \false \mid a \mid \neg a \mid \varphi \wedge \varphi \mid \varphi\vee\varphi \mid \X\varphi \\
	      \mid  & \; \F\varphi \mid \G\varphi \mid \varphi\U\varphi  \mid \varphi\W\varphi \mid \varphi\M\varphi \mid \varphi\R\varphi
\end{align*}
\noindent where $a \in Ap$. 
We denote $\subf(\varphi)$ the set of subformulas of $\varphi$. A subformula $\psi$ of $\varphi$ is called {\em proper} if it is neither a conjunction nor a disjunction, i,e., if the root of its syntax tree 
is labelled by either $a$, $\neg a$, or a temporal operator. 
The satisfaction relation $\models$ between $\omega$-words over the alphabet $2^{Ap}$ and formulas is inductively defined as follows:
\[\begin{array}[t]{lclclcl}
w \models \true \\
w \not\models \false \\
w \models a & \mbox{ if{}f } & a \in w[0] \\
w \models \neg a & \mbox{ if{}f } & a \not \in w[0] \\
w \models \varphi \wedge \psi & \mbox{ if{}f } & w \models \varphi \text{ and } w \models \psi\\
w \models \varphi \vee \psi & \mbox{ if{}f } & w \models \varphi \text{ or } w \models \psi\\
w \models \X \varphi & \mbox{ if{}f } & w_1 \models \varphi\\
w \models \F \varphi & \mbox{ if{}f } & \exists k. \, w_k \models \varphi\\
w \models \G \varphi & \mbox{ if{}f } & \forall k. \, w_k \models \varphi\\
w \models \varphi\U \psi & \mbox{ if{}f } & \exists k. \, w_k \models \psi ~\text{ and }~ \forall j < k. \, w_j \models \varphi \\
w \models \varphi\W \psi & \mbox{ if{}f } & w \models \G\varphi ~\text{ or }~ w \models \varphi\U \psi \\
w \models \varphi\M \psi & \mbox{ if{}f } & \exists k. \, w_k \models \varphi ~\text{ and }~ \forall j \leq k. \, w_j \models \psi \\
w \models \varphi\R \psi & \mbox{ if{}f } & w \models \G\psi ~\text{ or }~ w \models \varphi\M \psi \\
\end{array}\]
Two formulas are equivalent if they are satisfied by the same words. We also introduce the stronger notion of propositional equivalence:

\begin{definition}[Propositional Equivalence]
Given a formula $\varphi$, we assign to it a propositional formula $\varphi_P$ as follows: 
replace every maximal proper subformula $\psi$ by a propositional variable $x_\psi$. 
Two formulas $\varphi, \psi$ are {\em propositionally equivalent}, denoted $\varphi \equiv_P \psi$, 
if{}f $\varphi_P$ and $\psi_P$ are equivalent formulas of propositional logic. 
The set of all formulas propositionally equivalent to $\varphi$ is denoted by $[\varphi]_P$.
\end{definition}

\begin{example}
Let $\varphi = \X b \vee (\G(a \vee \X b) \wedge \X b)$ with $\psi_1 = \X b$ and $\psi_2 = \G(a \vee \X b)$. We have $\varphi_P = x_{\psi_1} \vee (x_{\psi_2} \wedge x_{\psi_1}) \equiv_P x_{\psi_1}$. Thus $\X b$ is propositionally equivalent to $\varphi$ and  $\X b \in [\varphi]_P$. \qee
\end{example}

\noindent Observe that propositional equivalence implies equivalence, but the converse does not hold.

\section{The \enquote{after} Function}
\label{sec:after}

We recall the definition of the\enquote{after function} $\aft(\varphi, w)$, read ``$\varphi$ after $w$'' \cite{DBLP:conf/cav/EsparzaK14,DBLP:journals/fmsd/EsparzaKS16}. The function assigns to a formula $\varphi$ and a finite word $w$ another formula such that, intuitively, $\varphi$ holds for $ww'$ if{}f $\aft(\varphi, w)$ holds ``after reading $w$'', that is,  if{}f $w' \models \aft(\varphi, w)$.\footnote{There is a conceptual correspondences to the derivatives of \cite{DBLP:journals/jacm/Brzozowski64} and $\aft$ directly connects to the classical \enquote{LTL expansion laws} \cite{DBLP:books/daglib/0020348}. Furthermore, the yet to be introduced $\daft$ relates to \cite{DBLP:journals/tcs/Antimirov96} in a similar way.}

\begin{definition}\label{def:af}
Let $\varphi$ be a formula and $\nu \in 2^{Ap}$ a single letter. The formula $\aft(\varphi,\nu)$ is inductively defined as follows:
\[
\arraycolsep=1.8pt
\begin{array}{ll}
\aft(a,\nu)&= \begin{cases} \true & \mbox{if $a \in \nu$} \\ \false & \mbox{if $a \notin \nu$}\end{cases} \\[0.3cm]
\aft(\neg a,\nu)&= \begin{cases} \false & \mbox{if $a \in \nu$} \\ \true & \mbox{if $a \notin \nu$}\end{cases}
\end{array}
\qquad
\begin{array}{ll}
\aft(\true,\nu)& = \true \\[0.1cm]
\aft(\false,\nu)& = \false \\[0.1cm]
\aft(\varphi\wedge\psi,\nu)& = \aft(\varphi,\nu)\wedge\aft(\psi,\nu) \\[0.1cm]
\aft(\varphi\vee\psi,\nu)& = \aft(\varphi,\nu)\vee\aft(\psi,\nu)
\end{array}\]
\[\arraycolsep=1.8pt
\begin{array}{ll}
\aft(\X\varphi,\nu)&= \varphi\\[0.1cm]
\aft(\F\varphi,\nu)&= \aft(\varphi,\nu)\vee \F\varphi \\[0.1cm]
\aft(\G\varphi,\nu)&= \aft(\varphi,\nu)\wedge \G\varphi \\[0.1cm]
\aft(\varphi\U\psi,\nu)&=\aft(\psi,\nu)\vee(\aft(\varphi,\nu)\wedge \varphi\U\psi) \\[0.1cm]
\aft(\varphi\W\psi,\nu)&=\aft(\psi,\nu)\vee(\aft(\varphi,\nu)\wedge \varphi\W\psi) \\[0.1cm]
\aft(\varphi\M\psi,\nu)&=\aft(\psi,\nu)\wedge(\aft(\varphi,\nu)\vee \varphi\M\psi) \\[0.1cm]
\aft(\varphi\R\psi,\nu)&=\aft(\psi,\nu)\wedge(\aft(\varphi,\nu)\vee \varphi\R\psi)
\end{array}\]
Furthermore, we generalize the definition to finite words by setting $\aft(\varphi, \epsilon) = \varphi$ and
$\aft(\varphi, \nu w) = \aft(\aft(\varphi,\nu),w)$ for every $\nu \in 2^{Ap}$ and every finite word $w$.
Finally, we define the set of formulas \emph{reachable from $\varphi$} as $\Reach(\varphi) = \{[\psi]_P \mid \exists w. ~ \psi = \aft(\varphi,w)\}$.
\end{definition}

\begin{example}
\label{ex:aft}
Let $\varphi = a \vee (b \; \U \; c) $. We then have $\aft(\varphi, \{a\}) \equiv_P \true$, $\aft(\varphi, \{b\}) \equiv_P (b \; \U \; c)$, $\aft(\varphi, \{c\}) \equiv_P \true$, and $\aft(\varphi, \emptyset) \equiv_P \false$. \qee
\end{example}

The following lemma states the main properties of $\aft$, which are easily proved by induction on the 
structure of $\varphi$. For convenience we include the short proof in the appendix of \cite{arXiv}.

\begin{restatable}{lemma}{lemAf}\cite{DBLP:journals/fmsd/EsparzaKS16}\label{lem:af:prop}
\begin{itemize}
\item[(1)] For every formula $\varphi$, finite word $w\in (2^{Ap})^*$, and infinite word $w'\in (2^{Ap})^\omega$:
$ww' \models \varphi$ if{}f $w' \models \aft(\varphi, w)$
\item[(2)] For every formula $\varphi$ and finite word $w\in (2^{Ap})^*$: $\aft(\varphi, w)$ is a positive boolean combination of proper subformulas of $\varphi$.
\item[(3)] For every formula $\varphi$: If $\varphi$ has $n$ proper subformulas, then $\Reach(\varphi)$ has at most size $2^{2^n}$.
\end{itemize}
\end{restatable}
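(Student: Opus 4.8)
The plan is to treat the three items in order, proving (1) and (2) by induction and deriving (3) from (2) by a counting argument. For (1) I would first isolate the single-letter case: for every formula $\varphi$, every $\nu \in 2^{Ap}$ and every infinite word $w'$, $\nu w' \models \varphi$ iff $w' \models \aft(\varphi, \nu)$, proved by structural induction on $\varphi$. The propositional cases ($\true$, $\false$, $a$, $\neg a$, $\wedge$, $\vee$) and the case $\X\varphi$ are immediate from the semantics, using $(\nu w')_1 = w'$ for $\X$. The temporal operators $\F, \G, \U, \W, \M, \R$ are the core: in each one I would split the defining quantifier over positions according to whether the obligation is discharged at position $0$ or only later, rewrite the ``position $0$'' part by applying the induction hypothesis to the immediate subformulas, and observe that the ``later'' part simply reproduces the operator on $w'$; this matches exactly the defining clause of $\aft$ for that operator. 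The full statement then follows by induction on $|w|$, combining the single-letter case with the recursion $\aft(\varphi, \nu w) = \aft(\aft(\varphi, \nu), w)$.

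For (2) the pivotal observation is that $\aft(\cdot, \nu)$ commutes with the Boolean connectives and fixes the constants, so if $\chi$ is a positive boolean combination of formulas $\psi_1, \dots, \psi_m$ then $\aft(\chi, \nu)$ is the same combination of $\aft(\psi_1, \nu), \dots, \aft(\psi_m, \nu)$. It therefore suffices to show, by structural induction on $\psi$, that $\aft(\psi, \nu)$ is a positive boolean combination of proper subformulas of $\psi$, together with the trivial remark that a proper subformula of a proper subformula of $\varphi$ is again a proper subformula of $\varphi$. In this induction the case $\X\psi$ uses that every formula is a positive boolean combination of its own maximal proper subformulas, and each temporal clause contributes the operator formula itself (a proper subformula) conjoined or disjoined with $\aft$ of its arguments, to which the hypothesis applies. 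The claim for an arbitrary finite word $w$ then follows by induction on $|w|$: it holds for $\epsilon$ since $\aft(\varphi, \epsilon) = \varphi$, and reading one more letter preserves the invariant by the commutation property just noted.

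For (3) I would count propositional equivalence classes. By (2) every reachable formula $\aft(\varphi, w)$ is a positive boolean combination of the $n$ proper subformulas of $\varphi$, so its maximal proper subformulas all lie among those $n$ formulas and its propositional abstraction is a propositional formula over the $n$ variables $\{x_\psi \mid \psi \text{ a proper subformula of } \varphi\}$. Two such abstractions are propositionally equivalent exactly when they denote the same Boolean function of these $n$ variables, and there are precisely $2^{2^n}$ such functions. Hence the number of classes $[\aft(\varphi, w)]_P$, which is $|\Reach(\varphi)|$, is at most $2^{2^n}$.

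I expect the main obstacle to be the temporal case analysis in (1): each operator must be unfolded so that the present obligation is separated from the future obligation precisely as in its $\aft$-clause, and the $\M$ and $\R$ cases in particular require care with the inclusive bound $\forall j \leq k$ on $\psi$, which is what makes $\aft(\psi,\nu)$ appear as a conjunct rather than a disjunct. Once the commutation of $\aft$ with the Boolean connectives is established, (2) is routine bookkeeping, and (3) reduces to the standard count of Boolean functions on $n$ variables.
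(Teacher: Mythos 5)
Your proposal is correct and follows essentially the same route as the paper's proof: part (1) via the single-letter case proved by structural induction using the expansion laws, lifted to finite words by induction on length; parts (2)--(3) via the observation that $\aft$ commutes with the Boolean connectives and only produces positive boolean combinations of proper subformulas, so each class in $\Reach(\varphi)$ is determined by a Boolean function over $n$ variables, of which there are $2^{2^n}$. Your write-up is in fact more detailed than the paper's (which only shows two representative cases for (1) and sketches (2)--(3) in a few lines), and your remarks on the $\X$, $\M$, and $\R$ cases correctly identify the points needing care.
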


It is easy to show by induction that $\varphi \equiv_P \psi$ 
implies $\aft(\varphi, w) \equiv_P \aft(\psi, w)$ for every 
finite word $w$. We extend $\aft \,$ to equivalence classes by defining 
$\aft([\varphi]_P, w) := [\aft(\varphi, w)]_P$. Sometimes we abuse language and identify a 
formula and its equivalence class. For example, we write ``the states of the automaton are
pairs of formulas'' instead of ``pairs of equivalence classes of formulas''.

\section{Constructing DRAs for Fragments of LTL}
\label{sec:fragments}

We show that the  function $\aft$ can be used to construct deterministic Büchi and coBüchi automata for some fragments of $LTL$. The constructions are very simple. Later, in Sections \ref{sec:ltl2dra}, \ref{sec:ltl2nba}, and \ref{sec:ltl2ldba} we use these constructions as building blocks for the translation of general LTL formulas. The fragments are:
\begin{itemize}
\item The $\mu$-fragment $\ltlfum$ and the $\nu$-fragment $\ltlgrw$.\\
$\ltlfum$ is the fragment of LTL restricted to temporal operators $\F, \U, \M$, on top of Boolean connectives $(\wedge, \vee)$, literals $(a, \neg a)$, and the next operator $(\X)$. $\ltlgrw$ is defined analogously, but with the operators $\G, \W, \R$. In the literature $\ltlfum$ is also called syntactic co-safety and $\ltlgrw$ syntactic safety.
\item The fragments $\G\F(\ltlfum)$ and $\F\G(\ltlgrw)$.\\
These fragments contain the formulas of the form $\G\F \varphi$, where $\varphi \in \ltlfum$, and 
$\F\G \varphi$, where $\varphi \in \ltlgrw$.
\end{itemize}

The reason for the names $\ltlfum$ and $\ltlgrw$ is that $\F, \U, \M$  are least-fixed-point operators, in the sense that their semantics is naturally formulated by least fixed points, e.g. in the $\mu$-calculus, while the semantics of
$\G, \W, \R$ is naturally formulated by greatest fixed points. 

The following lemma characterizes the words $w$ satisfying a formula $\varphi$ of these fragments in terms of the formulas $\aft(\varphi, w)$.
\begin{lemma}\cite{DBLP:journals/fmsd/EsparzaKS16}\label{thm:logical:fragments}
Let $\varphi \in \ltlfum$ and let $w$ be a word. We have:
\begin{itemize}
\item $w \models \varphi$ if{}f $~\exists i. ~ \aft(\varphi, w_{0i}) \equiv_P \true$.
\item $w \models \G\F \varphi$ if{}f $~\forall i.~\exists j. ~ \aft(\F\varphi, w_{ij}) \equiv_P \true$.
\end{itemize}
Let $\varphi \in \ltlgrw$ and let $w$ be a word. We have:
\begin{itemize}
\item $w \models \varphi$ if{}f $~\forall i. ~ \aft(\varphi, w_{0i}) \not \equiv_P \false$.
\item $w \models \F\G\varphi$ if{}f $~\exists i.\forall j. ~ \aft(\G\varphi, w_{ij}) \not \equiv_P \false$
\end{itemize}
\end{lemma}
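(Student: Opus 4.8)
The four statements split into two De Morgan–dual pairs, so the plan is to prove the two $\ltlfum$ statements directly and to obtain the two $\ltlgrw$ statements by duality; the first bullet is where the real work lies. Before starting I would record two auxiliary facts. First, a \emph{monotonicity} observation: since $\aft$ is compatible with $\equiv_P$ and $\aft(\true,\nu)=\true$, $\aft(\false,\nu)=\false$, once $\aft(\varphi,w_{0i})\equiv_P\true$ it stays $\equiv_P\true$ for every larger index, and symmetrically for $\false$. Second, by Lemma~\ref{lem:af:prop}(2) each $\aft(\varphi,w_{0i})$ is a \emph{positive} (hence monotone) Boolean combination of proper subformulas, and a monotone formula is $\equiv_P\true$ precisely when it evaluates to $\true$ under the all-variables-false assignment, and $\equiv_P\false$ precisely when it evaluates to $\false$ under the all-variables-true assignment. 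These two facts let me read off the propositional triviality of an $\aft$-formula from the triviality of its ingredients.

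Then I would prove ``$w\models\varphi$ iff $\exists i.\ \aft(\varphi,w_{0i})\equiv_P\true$'' by induction on $\varphi\in\ltlfum$. Literals, constants, $\X$ and $\vee$ follow immediately from the definition of $\aft$ together with the monotone-tautology criterion; for $\vee$ one uses that $\exists i.(A_i\vee B_i)$ is the same as $(\exists i.A_i)\vee(\exists i.B_i)$. The delicate Boolean case is $\wedge$: from witnesses $\aft(\varphi,w_{0i})\equiv_P\true$ and $\aft(\psi,w_{0j})\equiv_P\true$ I pass to the common index $\max(i,j)$ using monotonicity. For the temporal operators $\F,\U,\M$ I first establish the unrolled shape of $\aft$ by a side induction on $i$; for instance
\[
\aft(\varphi\U\psi,\,w_{0i}) \;\equiv_P\; \bigvee_{k=0}^{i-1}\Bigl(\aft(\psi,w_{ki})\wedge\bigwedge_{j=0}^{k-1}\aft(\varphi,w_{ji})\Bigr)\;\vee\;\Bigl(\bigwedge_{j=0}^{i-1}\aft(\varphi,w_{ji})\wedge(\varphi\U\psi)\Bigr).
\]
Evaluating this monotone formula under the all-false assignment (using fact two) kills the trailing $\varphi\U\psi$ disjunct and yields: $\aft(\varphi\U\psi,w_{0i})\equiv_P\true$ iff there is $k<i$ with $\aft(\psi,w_{ki})\equiv_P\true$ and $\aft(\varphi,w_{ji})\equiv_P\true$ for all $j<k$. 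The induction hypotheses for $\varphi$ and $\psi$ turn this into the semantics $\exists k.\ w_k\models\psi\ \wedge\ \forall j<k.\ w_j\models\varphi$ once all the witness indices are aligned to a single $i$ by monotonicity; $\F$ and $\M$ are completely analogous.

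The $\G\F$ statement is then a corollary. Since $w\models\G\F\varphi$ iff $\forall i.\ w_i\models\F\varphi$ and $\F\varphi\in\ltlfum$, applying the first statement to each suffix $w_i$ gives $w_i\models\F\varphi$ iff $\exists j\ge i.\ \aft(\F\varphi,w_{ij})\equiv_P\true$, which is exactly the claimed right-hand side.

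Finally, for the $\ltlgrw$ statements I would exploit that every $\psi\in\ltlgrw$ is the De Morgan dual $\bar\varphi$ of a unique $\varphi\in\ltlfum$ (with $\F\leftrightarrow\G$, $\U\leftrightarrow\R$, $\M\leftrightarrow\W$, and literals and $\wedge/\vee$ dualized). A routine induction shows $\aft(\bar\varphi,w)\equiv_P\overline{\aft(\varphi,w)}$ and that $\bar\chi\equiv_P\false$ iff $\chi\equiv_P\true$; since $w\models\bar\varphi$ iff $w\not\models\varphi$, negating the two $\ltlfum$ statements delivers the two $\ltlgrw$ statements verbatim. I expect the main obstacle of the whole argument to be precisely the ``mismatched'' connective–quantifier case—$\wedge$ against $\exists$ on the $\ltlfum$ side, and dually $\vee$ against $\forall$ should one prefer to redo the induction directly—where the naive commutation of the connective past the quantifier is \emph{false} in general and is rescued only by the monotonicity fact, which forces the relevant truth value to stabilize in $i$ (the set of $i$ with $\aft(\cdot,w_{0i})\equiv_P\true$ is upward closed, and dually the set with $\aft(\cdot,w_{0i})\equiv_P\false$ is upward closed). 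The remaining effort is the bookkeeping of the unrolled $\aft$-expansions for $\U,\W,\M,\R$.
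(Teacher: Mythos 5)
Your proof is correct, but there is nothing in this paper to compare it against: Lemma~\ref{thm:logical:fragments} is imported from \cite{DBLP:journals/fmsd/EsparzaKS16}, and neither the body nor the appendix proves it (the appendix only proves Lemma~\ref{lem:af:prop} and the properties of the advice functions). Judged on its own, your argument is sound, and your two auxiliary facts are exactly the right ones: upward-closedness in $i$ of the predicate $\aft(\varphi,w_{0i})\equiv_P\true$ (resp.\ $\equiv_P\false$) resolves the $\wedge$-versus-$\exists$ mismatch, and the all-false evaluation criterion for monotone formulas justifies both the $\vee$ case and the elimination of the trailing $\varphi\U\psi$ disjunct in your unrolled expansion of $\aft(\varphi\U\psi,w_{0i})$, which is the correct one; the De Morgan dualization is also legitimate here precisely because the paper's syntax contains every operator's dual and $\aft$ commutes with dualization. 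Two remarks. First, you prove more by induction than you need to: in each bullet one implication is trivial and holds for \emph{arbitrary} formulas, not just the fragments. If $\aft(\varphi,w_{0i})\equiv_P\true$ then $w_i\models\aft(\varphi,w_{0i})$ (propositional equivalence implies equivalence), hence $w\models\varphi$ by Lemma~\ref{lem:af:prop}(1); dually, $w\models\varphi$ forces $w_i\models\aft(\varphi,w_{0i})$ for every $i$, which rules out $\aft(\varphi,w_{0i})\equiv_P\false$. So only the directions ``$w\models\varphi$ implies existence of a witness index'' for $\ltlfum$ and ``never-$\false$ implies $w\models\varphi$'' for $\ltlgrw$ (the latter being what your duality reduction derives from the former) genuinely need the fragment restriction, the structural induction, and your two facts; exploiting this would cut the case analysis roughly in half. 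Second, a bookkeeping point: when you align the witnesses for $\varphi\U\psi$ at the common index $i=\max(\cdot)$, you need $i$ \emph{strictly} greater than the position $k$ witnessing $\psi$, since otherwise the $k$-th disjunct does not occur in the expansion of $\aft(\varphi\U\psi,w_{0i})$ (e.g.\ when $\psi\equiv_P\true$ the inner witness index can be $0$, giving $i=k$); taking the maximum plus one repairs this.
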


The following proposition constructs DBAs or DCAs for the fragments. The proof is an immediate consequence
of the lemma.

\begin{proposition}\label{prop:simpleaut}
Let $\varphi \in \ltlfum$. 
\begin{itemize}
\item The following DBA over the alphabet $2^{Ap}$ recognizes $L(\varphi)$: 
\[\mathcal{A}_\mu^\varphi= (\Reach(\varphi), \aft, \varphi, \inf(\true))\]
\item The following DBA over the alphabet $2^{Ap}$ recognizes $L(\G\F\varphi)$:
\[\mathcal{A}_{\G\F\mu}^\varphi= (\Reach(\F\varphi), \aft_{\F \varphi}, \F \varphi, \inf(\true))\]
\[\aft_{\F \varphi}(\psi, \nu) = \begin{cases} \F \varphi & \text{if } \psi \equiv_P \true \\ \aft(\psi, \nu) & \text{otherwise.} \end{cases}\]
\end{itemize}
Let $\varphi \in \ltlgrw$. 
\begin{itemize}
\item The following DCA over the alphabet $2^{Ap}$ recognizes $L(\varphi)$: 
\[\mathcal{A}_\nu^\varphi= (\Reach(\varphi), \aft, \varphi, \fin(\false))\]
\item The following DCA over the alphabet $2^{Ap}$ recognizes $L(\F\G\varphi)$:
\[\mathcal{A}_{\F\G\nu}^\varphi= (\Reach(\G\varphi), \aft_{\G \varphi}, \G \varphi, \fin(\false))\]
\[\aft_{\G \varphi}(\psi, \nu) = \begin{cases} \G \varphi & \text{if } \psi \equiv_P \false \\ \aft(\psi, \nu) & \text{otherwise.} \end{cases}\]
\end{itemize}
\end{proposition}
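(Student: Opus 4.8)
The plan is to derive each of the four automaton constructions directly from the corresponding clause of Lemma~\ref{thm:logical:fragments}, since the proposition is explicitly stated to be an immediate consequence of that lemma. The unifying idea is that each automaton is a \emph{deterministic} pre-automaton whose state space is $\Reach(\cdot)$, whose transition function tracks $\aft$ (possibly with a small reset), and whose acceptance condition merely asks whether the run visits, or avoids, the single distinguished state $\true$ or $\false$. Because each construction uses $\Reach$ as its state set, determinism and well-definedness follow from the fact established after Definition~\ref{def:af}: $\aft$ lifts to equivalence classes, so the transition on a class $[\psi]_P$ under letter $\nu$ is the single class $[\aft(\psi,\nu)]_P$, and finiteness of $\Reach$ comes from Lemma~\ref{lem:af:prop}(3).

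First I would handle $\mathcal{A}_\mu^\varphi$ and $\mathcal{A}_\nu^\varphi$, which are the base cases. For $\varphi \in \ltlfum$, the unique run of $\mathcal{A}_\mu^\varphi$ on $w$ passes through the states $[\aft(\varphi, w_{0i})]_P$ for $i = 0, 1, 2, \dots$. The Büchi condition $\inf(\true)$ holds on this run if{}f the state $\true$ occurs infinitely often; but $\aft(\true, \nu) = \true$ for every $\nu$, so $\true$ is absorbing, and hence \emph{infinitely often} collapses to \emph{at least once}, i.e.\ $\exists i.\, \aft(\varphi, w_{0i}) \equiv_P \true$. By the first clause of Lemma~\ref{thm:logical:fragments} this is exactly $w \models \varphi$, so $L(\mathcal{A}_\mu^\varphi) = L(\varphi)$. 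The argument for $\mathcal{A}_\nu^\varphi$ is dual: $\false$ is absorbing (since $\aft(\false,\nu)=\false$), so $\fin(\false)$---the run avoids $\false$ from some point on---is equivalent to the run never reaching $\false$, i.e.\ $\forall i.\, \aft(\varphi, w_{0i}) \not\equiv_P \false$, which is the third clause.

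Next I would treat the $\G\F$ and $\F\G$ constructions, where the novelty is the modified transition function $\aft_{\F\varphi}$ that \emph{resets} the state to $\F\varphi$ whenever it reaches $\true$. The point of the reset is to convert the quantifier pattern $\forall i.\exists j$ into a recurrence (``$\true$ is seen infinitely often''). Concretely, I would argue that the run of $\mathcal{A}_{\G\F\mu}^\varphi$ on $w$ visits $\true$ infinitely often if{}f the fragment-level condition $\forall i.\exists j.\, \aft(\F\varphi, w_{ij}) \equiv_P \true$ holds: each time the run hits $\true$ and restarts from $\F\varphi$ at some position $i$, the next hit at position $j$ witnesses $\aft(\F\varphi, w_{ij}) \equiv_P \true$, and conversely such witnesses guarantee the run keeps returning to $\true$. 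Composing with the second clause of Lemma~\ref{thm:logical:fragments} gives $L(\mathcal{A}_{\G\F\mu}^\varphi) = L(\G\F\varphi)$. The $\F\G$ case is again dual, using $\fin(\false)$ and the fourth clause.

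The main obstacle, and the only step requiring genuine care, is the equivalence between the automaton-level acceptance and the fragment-level quantifier condition in the two reset constructions. One must verify that the reset to $\F\varphi$ (resp.\ $\G\varphi$) faithfully re-initializes the monitoring of $\F\varphi$ from the current suffix, so that the indices $i$ at which resets occur line up with the universally quantified indices in $\forall i.\exists j$, and that no accepting behaviour is spuriously created or destroyed by the reset. This hinges on Lemma~\ref{lem:af:prop}(1) to relate $\aft(\F\varphi, w_{ij})$ to satisfaction of $\F\varphi$ on the infix starting at $i$; the base-case arguments, by contrast, are routine once the absorbing behaviour of $\true$ and $\false$ is observed.
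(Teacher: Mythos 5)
Your proposal is correct and follows exactly the route the paper intends: the paper proves this proposition simply by declaring it "an immediate consequence" of Lemma~\ref{thm:logical:fragments}, and your argument is a faithful elaboration of that derivation --- using the absorbing behaviour of $\true$ and $\false$ under $\aft$ for the base cases, and the reset-to-$\F\varphi$ (resp.\ $\G\varphi$) mechanism, justified via Lemma~\ref{lem:af:prop}(1) together with the first (resp.\ third) clause of the fragment lemma, to align reset positions with the $\forall i.\exists j$ quantification. No gaps; the step you single out as requiring care is indeed the only nontrivial one, and your sketch of how to close it is sound.
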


\begin{example}\label{ex:GF}
Let $\varphi =  a \wedge \X (b \vee \F c) \in \ltlfum$. The DBA  $\mathcal{A}_{\G\F\mu}^\varphi$ recognizing $L(\G\F\varphi)$ is depicted below. We use the abbreviations $\alpha:=\{\nu \in 2^{Ap} \mid a \in \nu\}$, $\beta:=\{\nu \in 2^{Ap} \mid b \in \nu\}$, and $\gamma:=\{\nu \in 2^{Ap} \mid c \in \nu\}$.\qee

\begin{center}
	\begin{tikzpicture}[x=3cm,y=1.5cm,font=\footnotesize,initial text=,outer sep=2pt]
	\tikzstyle{acc}=[double]

	\node[state,initial] (1) at (0,0) {$\F\varphi$};
	\node[state]         (2) at (1,0) {$\F\varphi \vee b \vee \F c$};
	\node[state]         (3) at (1,-0.75) {$\F\varphi \vee \F c$};
	\node[state,acc] 	 (4) at (2,0) {$\true$};
	
	\path[->] 
	(1) edge[loop above] node[above]{$2^{Ap} \setminus \alpha$} (1)
	(1) edge node[above]{$\alpha$} (2)
	(2) edge[loop above] node[above]{$\{a\}$} (2)
	(2) edge[bend right] node[left]{$\emptyset$} (3)
	(2) edge node[above]{$\beta, \gamma$} (4)
	(3) edge[bend right= 20] node[below,pos=0.1]{$\gamma$} (4)
	(3) edge[bend right] node[right]{$\{a\}, \{a, b\}$} (2)
	(3) edge[loop left] node[left]{$\emptyset, \{b\}$} (3)
	(4) edge[bend left=20, out=-45, in=225] node[above]{$2^{Ap}$} (1);
	\end{tikzpicture}
\end{center}
\end{example}

\begin{example}\label{ex:alternating}
Let $\varphi = a \W b \vee c \in \ltlgrw$. The DCA $\mathcal{A}_{\F\G\nu}^\varphi$ recognizing $L(\F\G\varphi)$ is depicted below. We use the abbreviations of Example \ref{ex:GF} again.

\begin{center}
	\begin{tikzpicture}[x=3cm,y=1.5cm,font=\footnotesize,initial text=,outer sep=2pt]
	\tikzstyle{acc}=[double]

	\node[state,initial] (1) at (0,0) {$\G \varphi$};
	\node[state] (2) at (1,0) {$\G \varphi \wedge a\W b$};
	\node[state,acc] (3) at (2,0) {$\false$};
	
	\path[->]  
	(1) edge[loop above] node[above]{$\beta, \gamma$} (1)
	(1) edge[bend left=10] node[above]{$\{a\}$} (2)
	(1) edge[bend right] node[below]{$\emptyset$} (3)
	(2) edge[loop above] node[above]{$\{a\},\{a,c\}$} (2)
	(2) edge[bend left=10] node[below]{$\beta$} (1)
	(2) edge node[above]{$\emptyset, \{c\}$} (3)
	(3) edge[bend left=20, out=-45, in=225] node[right, pos = 0.1]{$2^{Ap}$} (1);
	\end{tikzpicture}
\end{center}	
	Now consider the formula $\varphi = \F \G (a \U b \vee c)$. It does not belong to any of the fragments due to the deeper alternation of the least- and greatest-fixed-point operators:  $\F-\G-\U$. If we construct $\mathcal{A}_{\F\G\nu}^\varphi$ we obtain a DCA isomorphic to the one above, because $\aft(\psi_1 \U \psi_2, \nu)$ and  $\aft(\psi_1 \W \psi_2, \nu)$ are defined in the same way. However, the DCA does not recognize $L(\varphi)$: For example, on the word $\{a\}^\omega$, it loops on the middle state and accepts, even though $\{a\}^\omega \not\models \varphi$. 
The reason is that $\mathcal{A}_{\F\G\nu}^\varphi$ checks that the greatest fixed point holds, and cannot enforce satisfaction of the least-fixed-point formula $a\U b$. 
	
If only we were given a promise that $a\U b$ holds infinitely often, then we could conclude that such a run is accepting.
We can actually get such promises: for NBA and LDBA via the non-determinism of the automaton, and for DRA via the ``non-determinism'' of the acceptance condition. In the next section, we investigate how to utilize such promises (Section \ref{subsec:advice-use}) and how to check whether the promises are fulfilled or not (Section \ref{subsec:advice-check}). \qee
\end{example}

\section{The Master Theorem}
\label{sec:master}

We present and prove the Master Theorem: A characterization of the words satisfying a given formula from which we can easily extract deterministic, limit-deterministic, and nondeterministic automata of asymptotically optimal size. 

We first provide some intuition with the help of an example. Consider the formula $\varphi=\F\G((a\R b)\vee(c\U d))$, which does not belong to any of the fragments in the last section, and a word $w$. Assume we are promised that along $w$ the $\mu$-subformula $c \U d$ holds infinitely often (this is the case e.g. for $w=(\emptyset\{d\})^\omega$). In particular, we then know that $d$ holds infinitely often,  and so we can ``reduce''  $w\models^?\!\varphi$ to $w \models^?\! \F\G((a\R b)\vee(c\W d))$, which belongs to the fragment $\F\G(\ltlgrw)$.   

Assume now we are promised that $c \U d$ only holds finitely often (for example,  because $w=\{d\}^4\{c\}^\omega$). Even more, we are promised that along the suffix $w_5$ the formula $c \U d$ never holds any more. How can we use this advice?  First,  $w \models^? \!\varphi$ reduces to $w_5 \models^?\!  \aft(\varphi, w_{05})$ by the fundamental property of $\aft$, Lemma \ref{lem:af:prop}(1). Further, a little computation shows that $\aft(\varphi, w_{05}) \equiv_P  \varphi$, and so that $w \models^? \!\varphi$ reduces to $w_5 \models^?\! \varphi$. Finally, using that $c \U d$ never holds again, we reduce $w \models^? \!\varphi$ to $w_5 \models^? \F\G(a\R b \vee\false) \equiv_P \F\G(a\R b)$ which belongs to the fragment $\F\G(\ltlgrw)$.

This example suggests a general strategy for solving $w \models^?\! \varphi$: 
\begin{itemize}
\item Guess the set of least-fixed-point subformulas of $\varphi$ that hold infinitely often, denoted by $\setGF_w$, and the set of greatest-fixed-point subformulas that hold almost always, denoted by $\setFG_w$.
\item Guess a \emph{stabilization point} after which the least-fixed-point subformulas outside $\setGF_w$ do not hold any more, and the greatest-fixed-point subformulas of $\setFG_w$ hold forever.
\item Use these guesses to reduce $w \models^?\! \varphi$ to problems $w \models^?\! \psi$ for formulas $\psi$ that belong to the fragments introduced in the last section.
\item Check that the guesses are correct.
\end{itemize}
	
In the rest of the section we develop this strategy. In Section \ref{subsec:stability} we introduce the terminology
needed to formalize stabilization. Section \ref{subsec:advice} shows how to use a guess $X$ for $\setGF$ or a guess $Y$ for $\setFG$ to reduce $w \models^?\! \varphi$ to a simpler problem  $w \models^?\! \evalnu{\varphi}{X}$ or $w \models^?\! \evalmu{\varphi}{Y}$, where $\evalnu{\varphi}{X}$ and $\evalmu{\varphi}{Y}$ are read as ``$\varphi$ with $\G\F$-advice $X$'' and ``$\varphi$ with $\F\G$-advice $Y$'', respectively. Section \ref{subsec:advice-use} shows how to use the advice to decide $w \models^?\! \varphi$. Section \ref{subsec:advice-check} shows how to check that the advice is correct. The Master Theorem is stated and proved in Section \ref{subsec:advice-master-thm}.

\subsection{$\mu$- and $\nu$-stability.} \label{subsec:stability} 
Fix a formula $\varphi$. The set of subformulas of $\varphi $ of the form $\F\psi$, $\psi_1 \U \psi_2$, and $\psi_1 \M \psi_2$ is denoted by $\sfmu(\varphi)$. So, loosely speaking,  $\sfmu(\varphi)$ contains the set of subformulas of $\varphi$ with a least-fixed-point operator at the top of their syntax tree. Given a word $w$, we are interested in which of these formulas hold infinitely often, and which ones hold at least once, i.e., we are interested in the sets 
\begin{align*} 
\setGF_w & = \{\psi \mid \psi \in \sfmu(\varphi) \wedge w \models \G\F\psi \} \\
\setF_w & = \{\psi \mid \psi \in \sfmu(\varphi) \wedge w \models \F\psi \} 
\end{align*}
\noindent Observe that $\setGF_w \subseteq \setF_w$. We say that $w$ is $\mu$-stable with respect to 
$\varphi$ if $\setGF_w = \setF_w$.

\begin{example}
\label{ex:sfmu}
For $\varphi = \G a \vee b \U c$ we have $\sfmu(\varphi) = \{ b \U c \}$. Let $w = \{a\}^\omega$ and $w'=\{b\}\{c\}\{a\}^\omega$. We have $\setF_{w}= \emptyset = \setGF_{w}$ and $\setGF_{w'}= \emptyset \subset \{b \U c\} = \setF_{w'}$. So $w$ is $\mu$-stable with respect to $\varphi$, but $w'$ is not. \qee
\end{example}

Dually, the set of subformulas of $\varphi $ of the form $\G\psi$, $\psi_1 \W \psi_2$, and $\psi_1 \R \psi_2$ is denoted by $\sfnu(\varphi)$. This time we are interested in whether these formulas hold everywhere or almost everywhere, i.e., in the sets 
\begin{align*} 
\setFG_w & = \{\psi \mid \psi \in \sfnu(\varphi) \wedge w \models \F\G\psi \} \\
\setG_w & = \{\psi \mid \psi \in \sfnu(\varphi) \wedge w \models \G\psi \} 
\end{align*}
\noindent (Observe that the question whether a $\nu$-formula like, say, $\G a$, holds once or infinitely often makes no sense, because it holds once if{}f it holds infnitely often.) 
We have $\setFG_w \supseteq \setG_w$, and we say that  $w$ is $\nu$-stable with respect to 
$\varphi$ if $\setFG_w = \setG_w$.

\begin{example}
Let $\varphi$, $w$ and $w'$ as in Example \ref{ex:sfmu}. We have $\sfnu(\varphi) = \{ \G a \}$. The word $w$ is $\nu$-stable, but $w'$ is not, because $\setFG_{w'}= \{ \G a \} \supset \emptyset = \setG_{w'}$. \qee
\end{example}

So not every word is $\mu$-stable or $\nu$-stable. However, as shown by the following lemma,  all but finitely many suffixes of a word are $\mu$- and $\nu$-stable. 

\begin{lemma}
\label{lem:stable}
For every word $w$ there exist indices $i,j \geq 0$ such that for every $k \geq 0$ the 
suffix $w_{i+k}$ is $\mu$-stable and the suffix $w_{j+k}$ is $\nu$-stable.
\end{lemma}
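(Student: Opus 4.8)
The plan is to prove the two halves of the statement symmetrically, since $\mu$-stability and $\nu$-stability are dual notions. In both cases the key is that one of the two sets involved is invariant under taking suffixes while the other is monotone in the suffix index; finiteness of $\sfmu(\varphi)$ (resp.\ $\sfnu(\varphi)$) then forces the monotone sequence to stabilize, and the stabilization point is exactly the index we seek.

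First I would treat $\mu$-stability and produce the index $i$. The central observations are two facts about the suffixes $w_n$ of $w$. On the one hand, membership $\psi \in \setGF_{w_n}$, which asks whether $\psi$ holds infinitely often along $w_n$, is a tail property: it is unaffected by dropping a finite prefix, so $\setGF_{w_n} = \setGF_{w}$ for every $n$. On the other hand, $\psi \in \setF_{w_n}$ asks whether $\psi$ holds at some position $\geq n$, and this can only become harder as $n$ grows; hence $\setF_{w_0} \supseteq \setF_{w_1} \supseteq \cdots$ is a decreasing chain of subsets of the finite set $\sfmu(\varphi)$. Such a chain is eventually constant, say $\setF_{w_n} = \setF_{w_i}$ for all $n \geq i$. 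I would then argue that every suffix $w_n$ with $n \geq i$ is $\mu$-stable: given $\psi \in \setF_{w_n}$, the fact that $\psi$ still lies in $\setF_{w_m}$ for all $m \geq i$ means $\psi$ holds at arbitrarily large positions, i.e.\ infinitely often, so $\psi \in \setGF_{w} = \setGF_{w_n}$. Together with the always-valid inclusion $\setGF_{w_n} \subseteq \setF_{w_n}$ this gives $\setGF_{w_n} = \setF_{w_n}$, so $w_{i+k}$ is $\mu$-stable for every $k \geq 0$.

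The $\nu$-stability half producing $j$ is entirely dual, and I would present it by exchanging the roles. Here $\setFG_{w_n}$ (eventual permanence of $\psi$) is the tail-invariant set, so $\setFG_{w_n} = \setFG_w$ for all $n$, while $\setG_{w_n}$ (permanence of $\psi$ from position $n$ on) is monotone, this time increasing in $n$; so $\setG_{w_0} \subseteq \setG_{w_1} \subseteq \cdots$ is an increasing chain inside the finite set $\sfnu(\varphi)$ and stabilizes at some index $j$. For $n \geq j$, any $\psi \in \setFG_w$ is eventually permanently true, hence lies in $\setG_{w_m}$ for all sufficiently large $m$, and by stabilization already in $\setG_{w_n}$; with the always-valid inclusion $\setG_{w_n} \subseteq \setFG_{w_n}$ this yields $\nu$-stability of $w_{j+k}$ for all $k \geq 0$.

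There is no genuinely hard step here: once the two monotonicity and invariance observations are in place, finiteness of the two sets of proper temporal subformulas does all the work via stabilization of a monotone sequence. The only point requiring a little care is the direction of monotonicity in each case (decreasing for $\setF$, increasing for $\setG$) and matching it with the invariance of its partner set, so that the stabilized value coincides with the tail-invariant one.
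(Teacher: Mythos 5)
Your proof is correct. It rests on the same two ingredients as the paper's argument---tail-invariance of $\setGF_w$ (resp.\ $\setFG_w$) under taking suffixes, and finiteness of $\sfmu(\varphi)$ (resp.\ $\sfnu(\varphi)$)---but it packages the finiteness step differently. The paper argues directly: every $\psi \in \setF_w \setminus \setGF_w$ holds only finitely often, hence has a last occurrence, witnessed by an index $i_\psi$ past which it never holds; taking $i$ to be the maximum of these indices guarantees that any formula holding at all after $i$ holds infinitely often. You instead invoke an abstract stabilization principle: $(\setF_{w_n})_{n \geq 0}$ is a decreasing chain of subsets of a finite set, hence eventually constant, and membership in the stabilized chain forces occurrence at arbitrarily large positions. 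The two indices are essentially the same number---the chain strictly decreases precisely when some finitely-occurring formula passes its last occurrence---so both proofs describe the same phenomenon. Your packaging buys a visibly dual treatment of the two halves (decreasing chain paired with tail-invariant $\setGF$, increasing chain paired with tail-invariant $\setG$/$\setFG$), which you carry out explicitly where the paper proves only the $\mu$-half and declares the $\nu$-half similar, and it avoids the paper's case split on whether $\setF_w \setminus \setGF_w$ is empty. The paper's version, in exchange, names the stabilization point concretely (the last occurrence of the last finitely-occurring $\mu$-subformula), which is marginally more informative than the existence of a point where a monotone chain stabilizes.
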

\begin{proof}
We only prove the $\mu$-stability part; the proof of the other part is similar. Since $\setGF_{w_i} \subseteq \setF_{w_i}$ for every $i \geq 0$,
it suffices to exhibit an index $i$ such that $\setGF_{w_{i+k}} \supseteq \setF_{w_{i+k}}$ for every $k \geq 0$. If  $\setGF_{w} \supseteq \setF_{w}$ then we can choose $i:=0$. 
So assume $\setF_{w} \setminus \setGF_{w} \neq \emptyset$. By definition, every 
$\psi \in \setF_w \setminus \setGF_{w}$ holds only finitely often along $w$. So for every
$\psi \in \setF_w \setminus \setGF_{w}$ there exists an index $i_\psi$ such that $w_{i_\psi +k} \not\models  \psi$ for every $k \geq 0$. Let $i := \max \{i_\psi \mid \psi \in \setF_w\}$, which exists because  $\setF_w$ is a finite set. It follows $\setGF_{w_{i+k}} \supseteq \setF_{w_{i+k}}$ for every $k \geq 0$, and so every $w_{i+k}$ is $\mu$-stable.
\end{proof}

\begin{example}
Let again $\varphi = \G a \vee b \U c$. The word $w' = \{b\}\{c\}\{a\}^\omega$ is neither $\mu$-stable nor $\nu$-stable, but all suffixes $w'_{(2+k)}$ of $w'$ are both $\mu$-stable and $\nu$-stable. \qee
\end{example}

\subsection{The formulas $\evalnu{\varphi}{X}$ and $\evalmu{\varphi}{Y}$.} \label{subsec:advice} 

We first introduce $\evalnu{\varphi}{X}$. 
Assume we have to determine if a word $w$ satisfies $\varphi$, and we are told that $w$ is $\mu$-stable. Further, we are given the set $X \subseteq \sfmu(\varphi)$ such that $\setGF_w = X = \setF_w$. We use this oracle information to reduce the problem $w \models^? \varphi$ to a ``simpler'' problem $w \models^? \evalnu{\varphi}{X}$, where ``simpler'' means that $\evalnu{\varphi}{X}$ is a formula of $\ltlgrw$, for which we already know how to construct automata. In other words, we define a formula $\evalnu{\varphi}{X} \in \ltlgrw$ such that  $\setGF_w = X = \setF_w$ implies $w \models \varphi$ if{}f $w \models \evalnu{\varphi}{X}$. (Observe that $X \subseteq \sfmu(\varphi)$ but $\evalnu{\varphi}{X} \in \ltlgrw$, and so the latter, not the former, is the reason for the $\nu$-subscript in the notation $\evalnu{\varphi}{X}$.)

The definition of $\evalnu{\varphi}{X}$ is purely syntactic, and
the intuition behind it is very simple. All the main ideas are illustrated by the following examples, where we assume $\setGF_w = X = \setF_w$:
\begin{itemize}
\item $\varphi = \F a \wedge \G b$ and $X = \{ \F a \}$. Then $\F a \in \setGF_w$, which implies in particular $w \models \F a$. So we can reduce  $w \models^? \F a \wedge \G b$ to $w \models^?  \G b$, and so $\evalnu{\varphi}{X}:= \G b$.
\item $\varphi = \F a \wedge \G b$ and $X = \emptyset$. Then $\F a \not\in \setF_w$, and so $w \not\models \F a$. So we can reduce  $w \models^? \F a \wedge \G b$ to the trivial problem $w \models^?  \false$, and so $\evalnu{\varphi}{X}:= \false$.
\item $\varphi = \G(b \U c)$ and $X=\{ b \U c\}$. Then $b \U c \in \setGF_w$, and so $w \models \G\F (b \U c)$. This does not imply $w \models b \U c$, but implies that $c$ will hold in the future. So we can reduce $w \models^?  \G(b \U c)$ to $w \models^? \G(b \W c)$, a formula of $\ltlgrw$, and so $\evalnu{\varphi}{X}:= \G(b \W c)$.
\end{itemize}

\begin{definition}\label{def:evalnu}
Let $\varphi$ be a formula and let $X \subseteq \sfmu(\varphi)$. The formula $\evalnu{\varphi}{X}$ is inductively defined as follows:
\begin{itemize}
\item If $\varphi = \true, \false, a, \neg a$, then $\evalnu{\varphi}{X}=\varphi$.
\item If $\varphi = \mathit{op}(\psi)$ for $\mathit{op} \in \{ \X, \G \}$ then $\evalnu{\varphi}{X}=\mathit{op}(\evalnu{\psi}{X})$.
\item If $\varphi = \mathit{op}(\psi_1, \psi_2)$ for $\mathit{op} \in \{ \wedge, \vee, \W, \R \}$ then \\ $\evalnu{\varphi}{X}= \mathit{op}(\evalnu{\psi_1}{X}, \evalnu{\psi_2}{X})$.
\item If $\varphi = \F\psi$ then		
$\evalnu{\varphi}{X} = \begin{cases} \true & \mbox{if $\varphi \in X$} \\ \false & \mbox{otherwise.}\end{cases} $
\item If $\varphi = \psi_1 \U \psi_2$ then 
$\evalnu{\varphi}{X} =  \begin{cases} (\evalnu{\psi_1}{X})\W(\evalnu{\psi_2}{X}) & \mbox{if $\varphi \in X$} \\ \false & \mbox{otherwise.}\end{cases}$
\item If $\varphi = \psi_1 \M \psi_2$ then 
$\evalnu{\varphi}{X} =  \begin{cases} (\evalnu{\psi_1}{X})\R(\evalnu{\psi_2}{X}) & \mbox{if $\varphi \in X$} \\ \false & \mbox{otherwise.}\end{cases}$
\end{itemize}
\end{definition}

We now introduce, in a dual way, a formula  $\evalmu{\varphi}{Y} \in \ltlfum$ such that  $\setFG_w=Y=\setG_w$ implies  $w \models \varphi$ if{}f $w \models \evalmu{\varphi}{Y}$.

\begin{definition}\label{def:evalmu}
Let $\varphi$ be a formula and let $Y \subseteq \sfnu(\varphi)$. The formula $\evalmu{\varphi}{Y}$ is inductively defined as follows:
\begin{itemize}
\item If $\varphi = \true, \false, a, \neg a$, then $\evalmu{\varphi}{Y}=\varphi$.
\item If $\varphi = \mathit{op}(\psi)$ for $\mathit{op} \in \{ \X, \F \}$ then $\evalmu{\varphi}{Y}=\mathit{op}(\evalmu{\psi}{Y})$.
\item If $\varphi = \mathit{op}(\psi_1, \psi_2)$ for $\mathit{op} \in \{ \wedge, \vee, \U, \M \}$ then \\ $\evalmu{\varphi}{Y}= \mathit{op}(\evalmu{\psi_1}{Y}, \evalmu{\psi_2}{Y})$.
\item If $\varphi = \G\psi$ then		
$\evalmu{\varphi}{Y} = \begin{cases} \true & \mbox{if $\varphi \in Y$} \\ \false & \mbox{otherwise.}\end{cases} $
\item If $\varphi = \psi_1 \W \psi_2$ then 
$\evalmu{\varphi}{Y} = \begin{cases} \true & \mbox{if $\varphi \in Y$} \\ (\evalmu{\psi_1}{Y})\U(\evalmu{\psi_2}{Y}) & \mbox{otherwise.}\end{cases}$
\item If $\varphi = \psi_1 \R \psi_2$ then 
$\evalmu{\varphi}{Y} = \begin{cases} \true & \mbox{if $\varphi \in Y$} \\ (\evalmu{\psi_1}{Y})\M(\evalmu{\psi_2}{Y}) & \mbox{otherwise.}\end{cases}$
\end{itemize}
\end{definition}

\begin{example}
Let $\varphi = ((a \W b) \wedge \F c) \vee a \U d$.  We have:
\[\begin{array}{lclcl}
\evalnu{\varphi}{\{\F c\}} & = & ((a \W b) \wedge \true) \vee \false & \equiv_P & a \W b  \\
\evalnu{\varphi}{\{a \U d\}} & = & ((a \W b) \wedge \false) \vee a \W d & \equiv_P & a \W d \\
\evalnu{\varphi}{\emptyset} & = & ((a \W b) \wedge \false) \vee \false & \equiv_P & \false  \\
\evalmu{\varphi}{\{a \W b\}} & = & (\true \wedge \F c) \vee a \U d & \equiv_P &  \F c \vee a \U d \\
\evalmu{\varphi}{\emptyset} & = &  (a \U b \wedge \F c) \vee a \U d \\
\end{array}\] \qee
\end{example}

\subsection{Utilizing $\evalnu{\varphi}{X}$ and $\evalmu{\varphi}{Y}$.} \label{subsec:advice-use}

The following lemma states the fundamental properties of $\evalnu{\varphi}{X}$ and $\evalmu{\varphi}{Y}$. As announced above, for a $\mu$-stable word $w$ we can reduce the problem $w \models^? \varphi$ to  $w \models^? \evalnu{\varphi}{X}$, and for a $\nu$-stable word to $w \models^? \evalmu{\varphi}{Y}$. However, there is more: If we only know $X \subseteq \setGF_w$, then we can still infer $w \models \varphi$ from $w \models \evalnu{\varphi}{X}$, only the implication in the other direction fails.  

\begin{restatable}{lemma}{lemEvalnuEvalmu}\label{lem:evalnu,evalmu}
Let $\varphi$ be a formula and let $w$ be a word.\\
For every $X \subseteq \sfmu(\varphi)$:
\begin{itemize}
\item[(a1)] If $~\setF_w \subseteq X$ and $w \models \varphi$, then $w \models \evalnu{\varphi}{X}$.
\item[(a2)] If $~X \subseteq \setGF_w$ and $w \models \evalnu{\varphi}{X}$, then $w \models \varphi$.
\end{itemize}
\noindent In particular:
\begin{itemize}
\item[(a3)] If $~\setF_w = X = \setGF_w$ then $w \models \varphi$ if{}f $w \models \evalnu{\varphi}{X}$.
\end{itemize}
For every $Y \subseteq \sfnu(\varphi)$:
\begin{itemize}
\item[(b1)] If $~\setFG_w \subseteq Y$ and $w \models \varphi$, then $w \models \evalmu{\varphi}{Y}$.
\item[(b2)] If $~Y \subseteq \setG_w$ and  $w \models \evalmu{\varphi}{Y}$, then $w \models \varphi$.
\end{itemize}
\noindent In particular:
\begin{itemize}
\item[(b3)] If $~\setFG_w = Y = \setG_w$ then  $w \models \varphi$ if{}f $w \models \evalmu{\varphi}{Y}$.
\end{itemize}
\end{restatable}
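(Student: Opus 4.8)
The plan is to establish (a1) and (a2) by two separate structural inductions on $\varphi$; statement (a3) is then immediate, since $\setF_w = X = \setGF_w$ makes the premise $\setF_w \subseteq X$ of (a1) and the premise $X \subseteq \setGF_w$ of (a2) both hold, and the block (b1)--(b3) follows by a dual induction, interchanging $\F,\U,\M$ with $\G,\W,\R$, the sets $\setF,\setGF$ with $\setG,\setFG$, and $\true$ with $\false$. Two observations about suffixes drive the inductions and should be recorded first: $w_k \models \F\chi$ implies $w \models \F\chi$, so $\setF_{w_k} \subseteq \setF_w$; and $w \models \G\F\chi$ iff $w_k \models \G\F\chi$, so $\setGF_{w_k} = \setGF_w$. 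Together with $\sfmu(\psi) \subseteq \sfmu(\varphi)$ for every subformula $\psi$, these guarantee that the premise of (a1) is inherited by every subformula at every suffix the temporal semantics forces us to inspect, and likewise that $X \subseteq \setGF_w$ is inherited in (a2).

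For (a1) the Boolean cases, the $\X$ case, and the greatest-fixed-point cases $\G,\W,\R$, on which $\evalnu{\cdot}{X}$ acts as a homomorphism, reduce directly to the induction hypothesis after unfolding the semantics at the relevant witness positions. The only cases that touch $X$ are the least-fixed-point operators $\F\psi$, $\psi_1 \U \psi_2$, and $\psi_1 \M \psi_2$: here $w \models \varphi$ entails $w \models \F\varphi$, so $\varphi \in \setF_w \subseteq X$, and therefore $\evalnu{\varphi}{X}$ is its non-trivial branch ($\true$, or $\evalnu{\psi_1}{X}\W\evalnu{\psi_2}{X}$, or $\evalnu{\psi_1}{X}\R\evalnu{\psi_2}{X}$). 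For $\F\psi$ we are done immediately; for $\U$ and $\M$ the witness run satisfying $\varphi$ yields, through the induction hypothesis, a run satisfying $\evalnu{\psi_1}{X}\U\evalnu{\psi_2}{X}$ (resp.\ $\evalnu{\psi_1}{X}\M\evalnu{\psi_2}{X}$), which in particular satisfies the weaker $\W$ (resp.\ $\R$). Thus (a1) never needs the advice to do more than flip a membership bit.

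The real work is in (a2), and I expect its $\U$ and $\M$ cases to be the main obstacle. When $\varphi \notin X$ these operators translate to $\false$, so the premise $w \models \evalnu{\varphi}{X}$ is unsatisfiable and the implication holds vacuously. When $\varphi = \psi_1 \U \psi_2 \in X$ we must derive $w \models \psi_1 \U \psi_2$ from $w \models \evalnu{\psi_1}{X}\W\evalnu{\psi_2}{X}$, but the translation has deliberately weakened $\U$ to $\W$. Splitting the $\W$, the $\U$-disjunct is discharged directly by the induction hypothesis, whereas the $\G$-disjunct only yields $w \models \G\psi_1$, which alone does not give $\psi_1 \U \psi_2$. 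This is precisely where the advice is spent: from $\varphi \in X \subseteq \setGF_w$ we obtain $w \models \G\F\varphi$, hence $w \models \F\psi_2$, and combining a position where $\psi_2$ holds with $\G\psi_1$ produces the required until. The $\M$ case is analogous, extracting a position satisfying $\psi_1$ from $\varphi \in X \subseteq \setGF_w$ and combining it with the $\G\psi_2$ obtained from the $\R$-translation. Under the dual development this same difficulty reappears in (b1) for $\W$ and $\R$, where the premise $\setFG_w \subseteq Y$ is instead used to rule out the degenerate branch; the remaining directions (a1) and (b2) are the favorable ones and need no such argument.
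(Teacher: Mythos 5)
Your proposal is correct and follows essentially the same route as the paper's proof: a structural induction strengthened to hold on every suffix (the paper writes this as a $\forall i$ statement, you phrase it via the suffix-invariance of $\setGF_w$ and monotonicity of $\setF_w$), with the homomorphic cases discharged by the induction hypothesis and the advice spent exactly where you spend it --- in (a2) the membership $\varphi \in X \subseteq \setGF_w$ supplies the missing $\F\psi_2$ (resp.\ $\F\psi_1$) that upgrades the $\W$ (resp.\ $\R$) of the translation back to $\U$ (resp.\ $\M$), and in (b1) the premise $\setFG_w \subseteq Y$ rules out the $\G$-branch. The paper's appendix carries out the same case analysis (with $\U$/$\W$ and $\vee$ as representative cases) and differs from your plan in no substantive way.
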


\begin{proof} 
All parts are proved by a straightforward structural induction on $\varphi$. We consider only (a1), and only two representative cases of the induction. Representative cases for (a2), (b1), and (b2) can be found in the appendix of \cite{arXiv}. 

\smallskip

\noindent (a1) Assume $~\setF_w \subseteq X$. Then $\setF_{w_i} \subseteq X$ for all $i \geq 0$. We prove the following stronger statement via structural induction on $\varphi$:
\[\forall i. ~ (\; (w_i \models \varphi) \rightarrow (w_i \models \evalnu{\varphi}{X}) \; )\]

We consider one representative of the \enquote{interesting} cases, and one of the \enquote{straightforward} cases.

\noindent Case $\varphi = \psi_1 \U \psi_2$: Let $i \geq 0$ arbitrary and assume $w_i \models \psi_1 \U \psi_2$. Then $\psi_1 \U \psi_2 \in \setF_{w_i}$ and so $\varphi \in X$. We prove $w_i \models \evalnu{(\psi_1 \U \psi_2)}{X}$: 
\[
\arraycolsep=1.8pt
\begin{array}{clr}
 	        & w_i \models \psi_1 \U \psi_2 \\
\implies  & w_i \models \psi_1 \W \psi_2 \\ 
\implies  & \forall j. ~ w_{i+j} \models \psi_1 \vee \exists k \leq j.~ w_{i+k} \models \psi_2 \\
\implies  & \forall j. ~ w_{i+j} \models \evalnu{\psi_1}{X} \vee \exists k \leq j.~ w_{i+k} \models \evalnu{\psi_2}{X} & \text{(I.H.)} \\
\implies  & w_i \models (\evalnu{\psi_1}{X}) \W (\evalnu{\psi_2}{X}) \\
\implies  & w_i \models \evalnu{(\psi_1 \U \psi_2)}{X} & \text{($\varphi \in X$, Def. \ref{def:evalnu})}
\end{array}
\]

\noindent Case $\varphi = \psi_1 \vee \psi_2$: Let $i \geq 0$ arbitrary and assume $w_i \models \psi_1 \vee \psi_2$: 
\[\begin{array}{clrl}
	        & w_i \models \psi_1 \vee \psi_2  \\
\implies \; & (w_i \models \psi_1) \vee (w_i \models \psi_2) \\
\implies \; & (w_i \models \evalnu{\psi_1}{X}) \vee (w_i \models \evalnu{\psi_2}{X}) &  \text{(I.H.)} \\
\implies \; & w_i \models \evalnu{(\psi_1 \vee \psi_2)}{X} & \text{(Def. \ref{def:evalnu})} & \qedhere
\end{array}\]
\end{proof}

Lemma \ref{lem:evalnu,evalmu} suggests to decide $w \models^? \varphi$ by ``trying out'' all possible sets $X$. Part (a2) shows that the strategy of checking for every set $X$ if  both $X \subseteq \setGF_w$ and $w \models \evalnu{\varphi}{X}$ hold is sound. 

\begin{example}
\label{ex:possX}
Consider $\varphi=\G\F a\vee \G\F(b\wedge \G c)$. Since $\sfmu(\varphi) = \{ \F a, \F(b\wedge \G c)\}$, there are four possible $X$'s to be tried out: $\emptyset$, $\{\F a\}$, $\{\F(b \wedge \G c)\}$, and $\{\F a, \F(b \wedge \G c)\}$.  For $X = \emptyset$ we get $\evalnu{\varphi}{X} = \false$, indicating that if neither $a$ nor $b \wedge \G c$ hold infinitely often, then $\varphi$ cannot hold. For the other three possibilities ($a$ holds infinitely often, $b\wedge \G c$ holds infinitely often, or both) there are words satisfying $\varphi$, like $a^\omega$, $\{b,c\}^\omega$, and $\{a,b,c\}^\omega$. \qee
\end{example}

However there are still two questions open. First, is this strategy complete? Part (a3) shows that it is complete for $\mu$-stable words: Indeed, in this case there is a set $X$ such that $\setGF_w = X = \setF_w$, and for this particular set
$w \models \evalnu{\varphi}{X}$ holds. For words that are not $\mu$-stable, we will use the existence of $\mu$-stable suffixes: Instead of checking $w \models \evalnu{\varphi}{X}$, we will check the existence of a suffix $w_i$ such that $w_{i} \models \evalnu{\aft(\varphi, w_{0i})}{X}$. This will happen in Section \ref{subsec:advice-master-thm}. 
The second open question is simply how to check $X \subseteq \setGF_w$. We deal with it in
Section  \ref{subsec:advice-check}.

\subsection{Checking $X \subseteq \setGF_w$ and $Y \subseteq \setFG_w$.} \label{subsec:advice-check}

Consider again the formula $\varphi=\G\F a\vee \G\F(b\wedge \G c)$ of Example \ref{ex:possX}.
If $X=\{\F a\}$, then checking whether $X$ is a correct advice (i.e., whether $X \subseteq \setGF_w$ holds) is easy, because $\G\F\F a \in \G\F(\ltlfum)$, see Proposition~\ref{prop:simpleaut}. In contrast, for $X=\{\F(b\wedge \G c)\}$ this is not so.
In this case it would come handy if we had an advice $Y= \{\G c\}$ promising that $\G c$ holds almost always, as is the case for e.g. $\emptyset^5(\{b,c\}\{c\})^\omega$. Indeed, we could easily check correctness of this advice, because $\F\G\G c \in \F\G(\ltlgrw)$, and with its help checking $\G\F(b\wedge \G c)$ reduces to checking $\G\F(b\wedge \true)=\G\F b$, which is also easy.

One of the main ingredients of our approach is that in order to verify a promise $X \subseteq \setGF_w$ we can rely on a promise  $Y \subseteq \setFG_w$ about subformulas of $X$, and vice versa. There is no circularity in this rely/guarantee reasoning because the subformula order is well founded, and we eventually reach formulas $\psi$ such that $\evalnu{\psi}{X}=\psi$ or $\evalmu{\psi}{Y}=\psi$. This argument is formalized in the next lemma. The first part of the lemma states that mutually assuming correctness of  the other promise is correct. The second part states that, loosely speaking, this rely/guarantee method is complete.

\begin{lemma}\label{lem:evalFG:setFG}
Let $\varphi$ be a formula and let $w$ be a word. 
\begin{itemize}
\item[(1.)]  For every  $X \subseteq \sfmu(\varphi)$ and $Y \subseteq \sfnu(\varphi)$, if  
\[\begin{array}{l}
\forall \psi \in X. \; w \models \G\F(\evalmu{\psi}{Y})\\
\forall \psi \in Y. \; w \models \F\G(\evalnu{\psi}{X})
\end{array}\]
\noindent  then  $X \subseteq \setGF_w$ and $Y \subseteq \setFG_w$.
\smallskip
\item[(2.)] If  $X = \setGF_w$ and $Y = \setFG_w$ then
\[\begin{array}{l}
\forall \psi \in X. \; w \models \G\F(\evalmu{\psi}{Y})\\
\forall \psi \in Y. \; w \models \F\G(\evalnu{\psi}{X})
\end{array}\]
\end{itemize}
\end{lemma}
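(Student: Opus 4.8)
The plan is to prove both parts by \emph{well-founded induction on the subformula order}, exploiting the crucial observation that $\evalmu{\psi}{Y}$ depends only on $Y \cap \sfnu(\psi)$ and $\evalnu{\psi}{X}$ only on $X \cap \sfmu(\psi)$. Since every $\psi \in X$ is a $\mu$-formula and every $\chi \in Y$ a $\nu$-formula, any $\chi \in Y$ occurring inside some $\psi \in X$ is a \emph{strict} subformula of $\psi$, and symmetrically. Hence the guarantee for a formula in $X \cup Y$ relies only on guarantees for strictly smaller formulas, so the rely/guarantee reasoning is not circular and a single induction on formula size over $X \cup Y$ suffices. Throughout I use that $\G\F$ and $\F\G$ are tail properties, i.e.\ $w_k \models \G\F\psi$ iff $w \models \G\F\psi$, and $w_k \models \F\G\chi$ iff $w \models \F\G\chi$, for every $k$.

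For Part~(1) I induct over $\theta \in X \cup Y$. If $\theta = \psi \in X$, the hypothesis gives infinitely many positions $k$ with $w_k \models \evalmu{\psi}{Y}$. Each $\chi \in Y \cap \sfnu(\psi)$ is a strict subformula of $\psi$, so the induction hypothesis yields $w \models \F\G\chi$, and since this set is finite there is an index $N$ beyond which $w_k \models \G\chi$ holds simultaneously for all of them, i.e.\ $Y \cap \sfnu(\psi) \subseteq \setG_{w_k}$ for $k \geq N$. At the infinitely many $k \geq N$ with $w_k \models \evalmu{\psi}{Y}$, Lemma~\ref{lem:evalnu,evalmu}(b2), applied to $\psi$ with advice $Y \cap \sfnu(\psi)$, gives $w_k \models \psi$; hence $w \models \G\F\psi$, i.e.\ $\psi \in \setGF_w$. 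The case $\theta = \chi \in Y$ is dual: from $w \models \F\G(\evalnu{\chi}{X})$ fix $M$ with $w_k \models \evalnu{\chi}{X}$ for all $k \geq M$; each $\psi' \in X \cap \sfmu(\chi)$ satisfies $w \models \G\F\psi'$ by induction, hence $w_k \models \G\F\psi'$ for \emph{every} $k$ by tail-invariance, so $X \cap \sfmu(\chi) \subseteq \setGF_{w_k}$; Lemma~\ref{lem:evalnu,evalmu}(a2) then gives $w_k \models \chi$ for all $k \geq M$, i.e.\ $\chi \in \setFG_w$.

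For Part~(2), assume $X = \setGF_w$ and $Y = \setFG_w$; here no induction is needed and each formula is treated directly. For $\psi \in X$ we have $w \models \G\F\psi$, so $w_k \models \psi$ at infinitely many $k$; since $\setFG$ is tail-invariant, $\setFG_{w_k} = \setFG_w = Y$ at each such $k$, and Lemma~\ref{lem:evalnu,evalmu}(b1) gives $w_k \models \evalmu{\psi}{Y}$, whence $w \models \G\F(\evalmu{\psi}{Y})$. For $\chi \in Y$ we have $w \models \F\G\chi$, so fix $M$ with $w_k \models \chi$ for $k \geq M$. The delicate point is the side condition $\setF_{w_k} \subseteq X$ needed for Lemma~\ref{lem:evalnu,evalmu}(a1): unlike $\G\F$, the set $\setF$ is \emph{not} tail-invariant. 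Here I invoke Lemma~\ref{lem:stable}: for all sufficiently large $k$ the suffix $w_k$ is $\mu$-stable, so $\setF_{w_k} = \setGF_{w_k} = \setGF_w = X$. Taking $k$ beyond both $M$ and this stabilization bound, Lemma~\ref{lem:evalnu,evalmu}(a1) yields $w_k \models \evalnu{\chi}{X}$, hence $w \models \F\G(\evalnu{\chi}{X})$.

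The main obstacle is twofold. First, one must make the absence of circularity precise: this is exactly why the induction runs on the subformula order, and why restricting the advice to $Y \cap \sfnu(\psi)$ (resp.\ $X \cap \sfmu(\chi)$) is essential, so that Lemma~\ref{lem:evalnu,evalmu} can be applied to the subformula with a well-formed advice set. Second, there is a genuine asymmetry in Part~(2): the half establishing $w \models \F\G(\evalnu{\chi}{X})$ has a side condition involving $\setF$, a ``holds at least once'' set that is not tail-invariant, so it cannot rely on suffix-invariance alone and must truly use $\mu$-stability (Lemma~\ref{lem:stable}) to replace $\setF_{w_k}$ by $\setGF_{w_k} = X$ on a final segment; the dual half, and all of Part~(1), go through with tail-invariance together with the finite choice of a common stabilization index.
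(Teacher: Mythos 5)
Your proof is correct and takes essentially the same route as the paper's: your Part~(1) is the paper's rely/guarantee induction along the subformula order (the paper merely linearizes it into an enumeration with cumulative pairs $(X_i, Y_i)$, whereas you induct on the order directly and restrict the advice to $Y \cap \sfnu(\psi)$, resp.\ $X \cap \sfmu(\chi)$, which is exactly the content of the paper's Claim~1), with the same appeals to tail-invariance and to parts (a2)/(b2) of Lemma~\ref{lem:evalnu,evalmu}. Your Part~(2) matches the paper's argument step for step, including the key point you highlight: invoking Lemma~\ref{lem:stable} to obtain $\mu$-stable suffixes so that $\setF_{w_k} = \setGF_{w_k} = X$ before applying (a1).
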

\begin{proof} \noindent
(1.) Let $X \subseteq \sfmu(\varphi)$ and $Y \subseteq \sfnu(\varphi)$. Observe that $X \cap Y = \emptyset$. Let $n:= |X \cup Y|$. Let $\psi_1, \ldots, \psi_n$ be an enumeration of $X \cup Y$ compatible with the subformula order, i.e., if $\psi_i$ is a subformula of $\psi_j$, then $i \leq j$. Finally, let $(X_0, Y_0), (X_1, Y_1), \ldots, (X_{n}, Y_{n})$ be the unique sequence of
pairs satisfying:
\begin{itemize}
\item $(X_0, Y_0) = (\emptyset, \emptyset)$ and  $(X_n, Y_n) = (X, Y)$. 
\item For every $0 < i \leq n$, if $\psi_i \in X$ then $X_{i} \setminus X_{i-1} = \{ \psi_i \}$ and $Y_i = Y_{i-1}$, and if $\psi_i \in Y$, then $X_i = X_{i-1}$ and $Y_{i} \setminus Y_{i-1} = \{ \psi_i \}$.
\end{itemize}

We prove $X_i \subseteq \setGF_w$ and $Y_i \subseteq \setFG_w$ for every $0 \leq i \leq n$ by induction on $i$. For $i=0$ the result follows immediately from $X_0 = \emptyset= Y_0$. For $i > 0$ we consider two cases: 

\smallskip

\noindent \textbf{Case 1:} $\psi_{i} \in Y$, i.e., $X_i = X_{i-1}$ and $Y_i \setminus Y_{i-1} = \{ \psi_i \}$. 

By induction hypothesis and $X_{i} = X_{i-1}$ we have $X_{i} \subseteq \setGF_w$ and $Y_{i-1} \subseteq \setFG_w$. We prove $\psi_i \in \setFG_w$, i.e., $w \models \F\G \psi_i$, in three steps.

\noindent \textbf{Claim 1:} $\evalnu{\psi_i}{X} = \evalnu{\psi_i}{X_{i}}$. \\
By the definition of  the $\evalnu{\cdot}{\cdot}$ mapping, $\evalnu{\psi_i}{X}$ is completely determined by the $\mu$-subformulas of $\psi_i$ that belong to $X$. By the definition of the sequence
$(X_0, Y_0), \ldots, (X_{n}, Y_{n})$, a $\mu$-subformula of $\psi_i$ belongs to $X$ if{}f it belongs to $X_{i}$, and we are done. 

\noindent \textbf{Claim 2:} $X_i \subseteq \setGF_{w_{k}}$ for every $k \geq 0$. \\
Follows immediately from $X_{i} \subseteq \setGF_w$.

\noindent \textbf{Proof of $w \models \F\G \psi_i$.} 
By the assumption of the lemma we have $w \models \F\G(\evalnu{\psi_i}{X})$, and so, by Claim~1,  $w \models \F\G(\evalnu{\psi_i}{X_{i}})$. So there exists an index $j$
such that $w_{j+k} \models \evalnu{\psi_i}{X_i}$ for every $k \geq 0$. By Claim~2 we further have $X_i \subseteq \setGF_{w_{j+k}}$ for every $j, k \geq 0$. So we can apply part (a2) of Lemma \ref{lem:evalnu,evalmu} to $X_i$, $w_{j+k}$, and $\psi_i$, which yields
$w_{j+k} \models \psi_i$ for every $k \geq 0$. So $w \models \F\G \psi_i$.

\smallskip

\noindent \textbf{Case 2:} $\psi_{i} \in X$, i.e., $X_i \setminus X_{i-1} = \{ \psi_i \}$ and $Y_i = Y_{i-1}$. \\
In this case $X_{i-1} \subseteq \setGF_w$ and $Y_{i} \subseteq \setFG_w$. We prove $\psi_i \in \setGF_w$, i.e., $w \models \G\F \psi_i$ in three steps. 

\noindent \textbf{Claim 1:} $\evalmu{\psi_i}{Y} = \evalmu{\psi_i}{Y_{i}}$. \\
The claim is proved as in Case 1.

\noindent \textbf{Claim 2:} There is an $j \geq 0$ such that $Y_i \subseteq \setG_{w_{k}}$ for every $k \geq j$. \\
Follows immediately from $Y_{i} \subseteq \setFG_w$.

\noindent \textbf{Proof of $w \models \G\F \psi_i$.} By the assumption of the lemma we have $w \models \G\F(\evalmu{\psi_i}{Y})$. Let $j$ be the index of Claim~2.
By Claim~1 we have $w \models \G\F(\evalmu{\psi_i}{Y_{i}})$, and so there exist infinitely many 
$k \geq j$ such that $w_k \models \evalmu{\psi_i}{Y_{i}}$. By Claim~2 we further have 
$Y_i \subseteq \setG_{w_{k}}$. So we 
can apply part (b2) of Lemma \ref{lem:evalnu,evalmu} to $Y_i$, $w_{k}$, and $\psi_i$, which yields 
$w_{k} \models \psi_i$ for infinitely many $k \geq j$. So $w \models \G\F \psi_i$.

\medskip

\noindent (2.) Let $\psi \in \setGF_w$. We have $w \models \G\F\psi$, and so $w_i \models \psi$ for infinitely many $i \geq 0$. Since $\setFG_{w_i} = \setFG_w$ for every $i \geq 0$, part (b1) of Lemma \ref{lem:evalnu,evalmu} can be applied to $w_i$, $\setFG_{w_i}$, and $\psi$. This yields $w_i \models \evalmu{\psi}{\setFG_w}$ for infinitely many $i \geq 0$ and thus $w \models \G\F(\evalmu{\psi}{\setFG_w})$.

\smallskip

Let $\psi \in \setFG_w$. Since $w_i \models \F\G\psi$, there is an index $j$ such that  $w_{j+k} \models \psi$ for every $k \geq 0$. By Lemma \ref{lem:stable} the index $j$ can be chosen so that it also satisfies $\setGF_{w} = \setF_{w_{j+k}} = \setGF_{w_{j+k}}$ for every $k\geq 0$.  So part (a1) of Lemma \ref{lem:evalnu,evalmu} can be applied to $\setF	_{w_{j+k}}$, $w_{j+k}$, and $\psi$. This yields $w_{j+k} \models \evalnu{\psi}{\setGF_w}$ for every $k \geq 0$ and thus $w \models \F\G(\evalnu{\psi}{\setGF_w})$.
\end{proof}

\begin{example}
\label{ex:mutual}
Let $\varphi = \F(a \wedge \G( b \vee \F c))$, $X = \{ \varphi \}$, and 
$Y = \{ \G( b \vee \F c) \}$. 
\begin{itemize}
\item The condition $\forall \psi \in X. \; w \models \G\F(\evalmu{\psi}{Y})$ becomes
\[w \models \G\F\left(\evalmu{\varphi}{Y}\right) = \G\F(\F a) \equiv \G\F a\]
\item The condition $\forall \psi \in Y. \; w \models \F\G(\evalnu{\psi}{X})$ becomes
\[w  \models  \F\G\left(\evalnu{ \G( b \vee \F c)}{X}\right) = \F\G (\G b ) \equiv \F\G b\]
\end{itemize}
By Lemma \ref{lem:evalFG:setFG} (1) we then have that $w \models \G\F a \wedge \F\G b$ implies
$\varphi \in \setGF_w$ and $\G(b \vee \F c)) \in \setFG_w$. \qee
\end{example}

\subsection{Putting the pieces together: The Master Theorem.} \label{subsec:advice-master-thm}

Putting together Lemma \ref{lem:evalnu,evalmu} and Lemma \ref{lem:evalFG:setFG},
we obtain the main result of the paper, which we will use as ``Master Theorem'' for the construction of automata.

\begin{theorem}[Master Theorem]\label{thm:logical:char} For every formula $\varphi$ and for every word $w$: \, $w \models \varphi$ if{}f there exists $X \subseteq \sfmu(\varphi)$ and $Y \subseteq \sfnu(\varphi)$ satisfying
\begin{align*}
(1) & \quad \exists i. ~ w_{i} \models \evalnu{\aft(\varphi, w_{0i})}{X} \\
(2) & \quad \forall \psi \in X. ~ w \models \G\F(\evalmu{\psi}{Y})  \\
(3) & \quad \forall \psi \in Y. ~ w \models \F\G(\evalnu{\psi}{X})
\end{align*}
\end{theorem}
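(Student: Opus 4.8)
The plan is to prove both directions of the equivalence by combining the two preceding lemmas, with the choice $X := \setGF_w$ and $Y := \setFG_w$ playing the central role.

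For the forward direction, assume $w \models \varphi$ and set $X := \setGF_w$ and $Y := \setFG_w$. Conditions (2) and (3) then follow \emph{directly} from Lemma~\ref{lem:evalFG:setFG}(2), which states precisely that $X = \setGF_w$ and $Y = \setFG_w$ imply $\forall \psi \in X.\; w \models \G\F(\evalmu{\psi}{Y})$ and $\forall \psi \in Y.\; w \models \F\G(\evalnu{\psi}{X})$. The work lies in establishing (1). Here I would invoke Lemma~\ref{lem:stable} to pick an index $i$ such that the suffix $w_i$ is $\mu$-stable, i.e. $\setF_{w_i} = \setGF_{w_i}$; moreover, since a $\mu$-subformula holds infinitely often along $w$ iff it holds infinitely often along $w_i$, we have $\setGF_{w_i} = \setGF_w = X$. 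By Lemma~\ref{lem:af:prop}(1) we have $w_i \models \aft(\varphi, w_{0i})$, since $w = w_{0i}\, w_i \models \varphi$. Now I would apply Lemma~\ref{lem:evalnu,evalmu}(a3) to the word $w_i$ and the formula $\aft(\varphi, w_{0i})$: because $\setF_{w_i} = X = \setGF_{w_i}$, part (a3) gives $w_i \models \aft(\varphi, w_{0i})$ iff $w_i \models \evalnu{\aft(\varphi, w_{0i})}{X}$, which yields (1). A small technical point to verify is that $X$, originally defined as a subset of $\sfmu(\varphi)$, is the appropriate advice set for the formula $\aft(\varphi, w_{0i})$; by Lemma~\ref{lem:af:prop}(2) this formula is a positive boolean combination of proper subformulas of $\varphi$, so its $\mu$-subformulas are among those of $\varphi$, and restricting $X$ accordingly is harmless.

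For the backward direction, suppose $X$ and $Y$ satisfy (1)--(3). Conditions (2) and (3) are exactly the hypotheses of Lemma~\ref{lem:evalFG:setFG}(1), which yields $X \subseteq \setGF_w$ and $Y \subseteq \setFG_w$. From (1), fix the index $i$ with $w_i \models \evalnu{\aft(\varphi, w_{0i})}{X}$. Since $X \subseteq \setGF_w = \setGF_{w_i}$, I would apply the \emph{soundness} half, Lemma~\ref{lem:evalnu,evalmu}(a2), to the word $w_i$ and formula $\aft(\varphi, w_{0i})$: the hypotheses $X \subseteq \setGF_{w_i}$ and $w_i \models \evalnu{\aft(\varphi, w_{0i})}{X}$ give $w_i \models \aft(\varphi, w_{0i})$. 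Finally, Lemma~\ref{lem:af:prop}(1) turns this back into $w = w_{0i}\, w_i \models \varphi$, completing the direction.

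I expect the main subtlety to be the bookkeeping around stabilization in the forward direction: one must be careful that the advice set $X = \setGF_w$ coincides with $\setGF_{w_i}$ and with $\setF_{w_i}$ at the chosen stable suffix, so that the biconditional form (a3) applies rather than merely one of the one-sided implications. The backward direction is comparatively mechanical, needing only the soundness implication (a2) together with the fundamental $\aft$ property; its only requirement is to propagate the containment $X \subseteq \setGF_w$ to the suffix $w_i$, which is immediate since the set of $\mu$-formulas holding infinitely often is suffix-invariant. The role of $Y$ and condition (3) is somewhat asymmetric: they are not used to prove membership via (a2) directly, but only to certify, through Lemma~\ref{lem:evalFG:setFG}(1), that the guessed $X$ is a correct under-approximation of $\setGF_w$ — this mutual rely/guarantee structure is what makes the whole argument close without circularity.
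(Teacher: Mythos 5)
Your proof is correct and follows essentially the same route as the paper's: both directions set $X := \setGF_w$, $Y := \setFG_w$ and combine Lemma~\ref{lem:stable}, Lemma~\ref{lem:af:prop}, Lemma~\ref{lem:evalnu,evalmu}, and Lemma~\ref{lem:evalFG:setFG} in the same way. The only differences are cosmetic: you invoke the biconditional (a3) where the paper uses the one-sided (a1), and you are more explicit than the paper about the suffix-invariance of $\setGF$ and about why $X \subseteq \sfmu(\varphi)$ is a legitimate advice set for $\aft(\varphi, w_{0i})$ (via Lemma~\ref{lem:af:prop}(2)) --- points the paper silently glosses over.
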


Observe that $\evalnu{\aft(\varphi, w_{0i})}{X}$,  $\G\F(\evalmu{\psi}{Y})$, and 
$\F\G(\evalnu{\psi}{X})$ are formulas of $\ltlgrw$, $\G\F(\ltlfum)$, and $\F\G(\ltlgrw)$,
respectively, i.e., they all belong to the fragments of Section \ref{sec:fragments}.

Before proving the theorem, let us interpret it in informal terms. The Master Theorem states that
in order to decide $w \models^?\! \varphi$ we can guess two sets $X \subseteq \sfmu(\varphi)$
and $Y \subseteq \sfnu(\varphi)$ and an index $i$, and then proceed as follows: verify $Y \subseteq \setFG_w$  assuming that $X \subseteq \setGF_w$ holds (3), verify $X \subseteq \setGF_w$  assuming that $Y \subseteq \setFG_w$ holds (2), and verify $w_i \models \aft(\varphi, w_{0i})$ assuming that $X \subseteq \setGF_w$ holds (1). The procedure is sound by Lemma \ref{lem:evalnu,evalmu} and Lemma \ref{lem:evalFG:setFG}, and complete because the guess where $X := \setGF_w$, $Y := \setFG_w$, and $i$ is a stabilization point of $w$, is guaranteed to succeed.

\begin{example}
Let $\varphi = \F(a \wedge \G( b \vee \F c))$ as in Example \ref{ex:mutual}, and let $\varphi' = d \U \varphi$. For $X = \{\varphi,\varphi' \}$,   $Y = \{ \G( b \vee \F c) \}$, and $i=0$ the Master Theorem yields that $w \models \varphi'$ is implied by 
\begin{itemize}
\item[(1)] $w \models \evalnu{(d \U \varphi)}{X}= d \W (\evalnu{\varphi}{X}) = d \W \true \equiv \true$, 
\item[(2)] $w \models \G\F(\evalmu{\varphi}{Y}) \wedge \G\F(\evalmu{\varphi'}{Y}) =  \G\F a \wedge \G\F (d \U (\F a))$, and 
\item[(3)] $w \models \F\G(\evalnu{(\G(b \vee \F c)}{X}) \equiv \F\G b$. 
\end{itemize}
For $X=\{\varphi\}$, $Y = \{ \G( b \vee \F c) \}$, and $i=0$, condition (1)  is $w \models \false$, and we do not derive any useful information. \qee
\end{example}

\smallskip

\begin{proof}[Proof (of the Master Theorem)] ~\\ 
\noindent ($\Rightarrow$): Assume $w \models \varphi$, and set $X := \setGF_w$ and $Y := \setFG_w$. Properties (2) and (3) follow from Lemma \ref{lem:evalFG:setFG}. For property (1), let $i$ be an index such that $\setF_{w_i} = \setGF_{w_i}$; this index exists by Lemma \ref{lem:stable}. By Lemma \ref{lem:af:prop} we have $w_i \models \aft(\varphi, w_{0i})$, and by Lemma \ref{lem:evalnu,evalmu} (a1) $w_i \models \evalnu{\aft(\varphi, w_{0i})}{X}$.

\smallskip

\noindent ($\Leftarrow$): Assume that properties (1-3) hold for sets $X,Y$ and an index $i$. By Lemma \ref{lem:evalFG:setFG} (1.) we have $X \subseteq \setGF_w$, and so
$X \subseteq \setGF_{w_i}$. By Lemma \ref{lem:evalnu,evalmu} (a2) we obtain $w_i \models \evalnu{\aft(\varphi,w_{0i})}{X}$, and thus $w_i \models \aft(\varphi, w_{0i})$. Lemma \ref{lem:af:prop} yields $w \models \varphi$.
\end{proof}

Let $L_{X,Y}^j$ be the language of all words that satisfy condition (j) of the Master Theorem for the sets $X$ and $Y$. The Master Theorem can then be reformulated as:
\[L(\varphi) = \bigcup_{\substack{X \subseteq \sfmu(\varphi)\\ Y \subseteq \sfnu(\varphi)}} L_{X,Y}^1 \cap L_{X,Y}^2 \cap L_{X,Y}^3\]

\noindent Therefore, given an automata model effectively closed under union and intersection,  in order to construct automata for all of LTL it suffices to exhibit automata recognizing $L_{X,Y}^1, L_{X,Y}^2, L_{X,Y}^3$.  In the next section we consider the case of DRAs, and then we proceed to NBAs and LDBAs. 
\section{Constructing DRAs for LTL Formulas}
\label{sec:ltl2dra}

Let $\varphi$ be a formula of length $n$. We use the Master Theorem to construct a DRA
for $L(\varphi)$ with $2^{2^{O(n)}}$ states and $O(2^n)$ Rabin pairs. Since our purpose is only to show that we can easily obtain automata of asymptotically optimal size, we give priority to a simpler construction over one with the least number of states. We comment in Section \ref{sec:discussion} on optimizations that reduce the size by using other acceptance conditions.

We first construct DRAs for $L_{X,Y}^1$, $L_{X,Y}^2$, and  $L_{X,Y}^3$ with $2^{2^{O(n)}}$ states and one single Rabin pair. More precisely, for each of these languages we construct either a DBA or a DCA. We then construct a DRA for $L(\varphi)$ by means of intersections and unions. 

\paragraph{A DCA for $L_{X,Y}^1$.} We define a DCA $\mathcal{C}_{\varphi,X}$ that accepts a word $w$ if{}f $w_{i} \models \evalnu{\aft(\varphi, w_{0i})}{X}$ for some suffix $w_i$ of $w$.  In the rest of this part of the section we abbreviate $\aft(\varphi, w_{0i})$ to $\varphi_i$.  Recall that $\evalnu{\varphi_i}{X}$ is a formula of $\ltlgrw$, and so for every $i \geq 0$ there is a DCA with a state $\false$ such that the automaton rejects if{}f it reaches this state. Intuitively, if the automaton rejects, then it rejects ``after finite time''.
We prove the following lemma:

\begin{lemma}\label{lem:delay}
Let $\varphi_i := \aft(\varphi, w_{0i})$. If $w \models \evalnu{\varphi}{X}$ then $w_i \models \evalnu{\varphi_i}{X}$ for all $i > 0$.
\end{lemma}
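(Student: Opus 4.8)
The plan is to reduce the statement to a single-letter claim and then iterate using the fundamental property of $\aft$ (Lemma \ref{lem:af:prop}(1)). Concretely, I would first fix $X$ and prove the following \emph{commutation-as-entailment} property: for every formula $\psi$ and every letter $\nu \in 2^{Ap}$,
\[ L\bigl(\aft(\evalnu{\psi}{\nu}, \nu)\bigr) \subseteq L\bigl(\evalnu{\aft(\psi, \nu)}{X}\bigr), \]
where I am sloppily writing $\evalnu{\psi}{\nu}$ for $\evalnu{\psi}{X}$ in the outer position. The essential point, and the conceptual crux, is that this is only a one-directional inclusion, \emph{not} an equivalence: for instance if $\psi = \F a$ with $\F a \notin X$ then $\evalnu{\psi}{X} = \false$, so the left-hand side is $\false$, whereas $\evalnu{\aft(\F a, \nu)}{X}$ may well be satisfiable. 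Recognising that only the inclusion holds---and that it is exactly what the lemma requires---is what makes the proof go through; a naive attempt to prove full commutation would fail.

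Granting this property, the lemma follows by induction on $i$. For $i = 0$ there is nothing to prove (and the case $i>0$ in the statement is then a fortiori covered). For the step, assume $w_i \models \evalnu{\varphi_i}{X}$. By Lemma \ref{lem:af:prop}(1), applied to the formula $\evalnu{\varphi_i}{X}$ and the one-letter word $w[i]$, this is equivalent to $w_{i+1} \models \aft(\evalnu{\varphi_i}{X}, w[i])$. The commutation property then yields $w_{i+1} \models \evalnu{\aft(\varphi_i, w[i])}{X}$, and since $\aft(\varphi_i, w[i]) = \aft(\varphi, w_{0,i+1}) = \varphi_{i+1}$ by the definition of $\aft$ on words, we conclude $w_{i+1} \models \evalnu{\varphi_{i+1}}{X}$. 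The base step for $i=1$ uses the hypothesis $w \models \evalnu{\varphi}{X} = \evalnu{\varphi_0}{X}$.

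It remains to establish the commutation property, which I would prove by structural induction on $\psi$, simply comparing the two $\aft$-unfoldings case by case. The ``off-$X$'' cases ($\F\psi'$, $\psi_1\U\psi_2$, $\psi_1\M\psi_2$ not in $X$) are immediate, since there $\evalnu{\psi}{X} = \false$ and the left-hand side degenerates to $\false$. The atomic and $\X$ cases give exact equality. All remaining cases---the Boolean connectives, $\G$, $\W$, $\R$, and the in-$X$ cases of $\U$ and $\M$---have the same shape: after applying the expansion laws of Definition \ref{def:af} to both sides, the two formulas differ only in subterms $\aft(\evalnu{\psi_k}{X}, \nu)$ versus $\evalnu{\aft(\psi_k, \nu)}{X}$, which are related by the induction hypothesis, together with a syntactically identical ``tail'' (e.g. the term $(\evalnu{\psi_1}{X})\W(\evalnu{\psi_2}{X})$ in the $\U$ case, using that $\evalnu{(\psi_1\U\psi_2)}{X} = (\evalnu{\psi_1}{X})\W(\evalnu{\psi_2}{X})$ when $\psi_1\U\psi_2 \in X$). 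Monotonicity of $\wedge$ and $\vee$ with respect to language inclusion then closes each case. I expect the main obstacle to be purely organisational: matching up the correct disjuncts and conjuncts on the two sides, and staying disciplined about invoking the induction hypothesis only through the Boolean connectives, rather than through the temporal operators where equality breaks down.
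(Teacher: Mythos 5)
Your proof is correct and takes essentially the same route as the paper: the paper likewise reduces to the single-letter case ($i=1$, then induction on $i$) and proves, by structural induction on $\varphi$, exactly your commutation-as-entailment claim, only phrased semantically (if $w \models \evalnu{\varphi}{X}$ then $w_1 \models \evalnu{\aft(\varphi, w[0])}{X}$, which is your language inclusion transported through Lemma~\ref{lem:af:prop}(1)), with the same key observations that the inclusion is one-directional and that the off-$X$ cases collapse to $\false$. One minor slip: your case enumeration omits $\F\psi'$ with $\F\psi' \in X$, which does not fit the uniform ``I.H.\ plus identical tail'' shape, but it is trivial anyway, since the left-hand side is $\aft(\true,\nu)=\true$ while the right-hand side $\evalnu{\aft(\psi',\nu)}{X} \vee \true$ is valid, so the inclusion holds.
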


\begin{proof}
Assume $w \models \evalnu{\varphi}{X}$. It suffices to prove $w_1 \models \evalnu{\varphi_1}{X}$, since the general case follows immediately by induction. For $i = 1$ we proceed by structural induction on $\varphi$, and consider only some representative cases.

\noindent \textbf{Case $\varphi = a$.}  Since $w \models \evalnu{a}{X} = a$ we have $a \in w[0]$. So $\evalnu{\varphi_1}{X} = \evalnu{\true}{X} = \true$, and thus $w_1 \models\evalnu{\varphi_1}{X}$.

\noindent \textbf{Case $\varphi = \psi \U \chi$.}  Since  $w \models \evalnu{\varphi}{X}$ we have $\evalnu{\varphi}{X} \neq \false$, and so $\varphi \in X$. We have:
\[
\arraycolsep=1.8pt
\begin{array}{clr}
	     & w \models \evalnu{\varphi}{X} \\
\implies & w \models (\evalnu{\psi}{X}) \W (\evalnu{\chi}{X}) & \text{(Def. \ref{def:evalnu})} \\
\implies & w \models (\evalnu{\psi}{X} \wedge \X ((\evalnu{\psi}{X}) \W (\evalnu{\chi}{X}))) \vee \evalnu{\chi}{X} \\
\implies & w \models (\evalnu{\psi}{X} \wedge \X (\evalnu{(\psi \U \chi)}{X})) \vee \evalnu{\chi}{X} & \text{($\varphi \in X$)} \\
\implies & w_1 \models (\evalnu{\psi_1}{X} \wedge \evalnu{\varphi}{X}) \vee \evalnu{\chi_1}{X} & \text{(I.H.)} \\
\implies & w_1 \models \evalnu{((\psi_1 \wedge (\psi \U \chi)) \vee \chi_1)}{X} & \text{(Def. \ref{def:evalnu})}  \\
\implies & w_1 \models \evalnu{\varphi_1}{X} & \text{(Def. \ref{def:af})} 
\end{array}\] \qedhere
\end{proof}

\noindent Loosely speaking,  $\mathcal{C}_{\varphi,X}$ starts by checking $w~\models^?~\evalnu{\varphi}{X}$. For this it maintains the formula $(\evalnu{\varphi}{X})_i$ in its state. If the formula becomes  $\false$ after, say, $j$ steps, then  $w~\not\models~\evalnu{\varphi}{X}$, and $\mathcal{C}_{\varphi,X}$ proceeds to check $w \models^? \evalnu{\varphi_j}{X}$. In order to ``switch'' to this new problem, $\mathcal{C}_{\varphi,X}$ needs to know $\varphi_j$, and so it maintains $\varphi_j$ it in its state. In other words, after $j$ steps $\mathcal{C}_{\varphi,X}$ is in state
$\left(\varphi_j, \aft(\evalnu{\varphi_i}{X}, w_{ij})\right)$, where $i \leq j$ is the number of steps after which $\mathcal{C}_{\varphi,X}$ switched to a new problem for the last time. If the second component of the state becomes $\false$, then the automaton uses the first component to determine which formula to check next. The accepting condition states that the transitions leading to a state of the form $(\psi, \false)$ must occur finitely often, which implies that eventually one of the checks  $w \models^? \evalnu{\varphi_j}{X}$ succeeds.

The formal description of $\mathcal{C}_{\varphi,X}$ is as follows:
\[\mathcal{C}_{\varphi,X} = \left(\Reach(\varphi) \times \evalnu{\Reach(\varphi)}{X}, \delta, (\varphi, \evalnu{\varphi}{X}), \fin(F) \right)\] where
\begin{itemize}
    \item $\evalnu{\Reach(\varphi)}{X} = \bigcup_{\psi \in \Reach(\varphi)} \Reach(\evalnu{\psi}{X})$
	\item $\delta((\xi, \zeta), \nu) = \begin{cases} (\aft(\xi, \nu), \aft(\evalnu{\xi}{X}, \nu)) & \text{if } \zeta \equiv_P \false \\ (\aft(\xi, \nu), \aft(\zeta, \nu)) & \text{otherwise.} \end{cases}$ 
	\item $F = \Reach(\varphi) \times \{\false\}$
\end{itemize}
\noindent Since $\Reach(\varphi) $ has at most size $2^{2^n}$, the number of states of $\mathcal{C}_{\varphi,X}$ is bounded by $\left(2^{2^n}\right)^2 = 2^{O(2^n)}$. 

\begin{example}
Let $\varphi = \G (a \U b \vee \F c)$, $X = \{a \U b\}$, and $\evalnu{\varphi}{X} = \G (a \W b)$. Below we show a fragment of  $\mathcal{C}_{\varphi,X}$, with $\alpha,\beta,\gamma$ as in Example  \ref{ex:GF}.

\begin{center}
	\begin{tikzpicture}[x=3cm,y=1.5cm,font=\footnotesize,initial text=,outer sep=2pt]
	\tikzstyle{acc}=[double]

	\node[state,initial] (1) at (0,0) {$\varphi, \evalnu{\varphi}{X}$};
	\node[state,acc] (2) at (2,0)  {$\varphi, \false$};
	\node[state] (3) at (1,0) {$\begin{array}{c}\varphi \wedge (a \U b \vee \F c), \\ \evalnu{\varphi}{X} \wedge a \W b\end{array}$};
	\node[] (4) at (0,-0.75) {};
	\node[] (5) at (1,-0.75) {};
	\node[] (6) at (2,-0.75) {};
	
	\path[->]  
	(1) edge[loop above] node[above]{$\beta$} (1)
	(1) edge[bend left] node[above]{$\{c\}$} (2)
	(1) edge[bend right=15] node[below]{$\{a\}$} (3)
	(2) edge node[below]{$\{a\}$} (3)
	(2) edge[loop above] node[above]{$\{c\}$} (2)
	(3) edge[bend right=15] node[below]{$\beta$} (1);
	
	\path[->, dashed] 
	(1) edge (4)
	(2) edge (6)
	(3) edge (5);
	
	\end{tikzpicture}
\end{center}

\noindent For $w = \{ c \} \{ c \} ( \{ a \} \{ b \})^\omega$ we have $X = \setGF_w$; the word is accepted. For $w' = \{c\}^\omega$ we have $X \neq \setGF_{w'}$, and the word is rejected. \qee
\end{example}

\paragraph{A DBA for $L_{X,Y}^2$.} We define a DBA recognizing
$L\left(\bigwedge_{\psi \in X} \G\F(\evalmu{\psi}{Y})\right)$. Observe that 
$\G\F (\evalmu{\psi}{Y}) \in \G\F(\ltlfum)$ for every $\psi \in X$, and that  $\evalmu{\psi}{Y}$ 
has at most $n$ subformulas. By Proposition \ref{prop:simpleaut}, $L(\G\F (\evalmu{\psi}{Y})$ is recognized by a DBA with at most $2^{2^{O(n)}}$ states. Recall that the intersection of the languages of $k$ DBAs with $s_1, \ldots, s_k$ states is recognized by a DBA with $k \cdot\prod_{j=1}^k s_j$ states.  Since $|X| \leq n$, the intersection of the DBAs for the formulas $\G\F (\evalmu{\psi}{Y})$ yields a DBA with at most $n \cdot \left(2^{2^{O(n)}}\right)^n = 2^{n2^{O(n)}} = 2^{2^{O(n)}}$ states.

\paragraph{A DCA for $L_{X,Y}^3(\varphi)$.} The DCA for
$L\left( \bigwedge_{\psi \in Y} \F\G(\evalnu{\psi}{X}\right)$ is obtained dually to the previous case, applying $\F\G (\evalnu{\psi}{X}) \in \F\G(\ltlgrw)$, and 
Proposition \ref{prop:simpleaut}.

\paragraph{A DRA for $L(\varphi)$.} By the Master Theorem we have:
\[L(\varphi) = \bigcup_{\substack{X \subseteq \sfmu(\varphi)\\Y \subseteq \sfnu(\varphi)}} L_{X,Y}^1 \cap L_{X,Y}^2 \cap L_{X,Y}^3\]
\noindent We first construct a DRA $A_{X,Y}$ for the intersection of $L_{X,Y}^i$, where $i=1,2,3$.
Let $A_{X,Y}^i$ be the DCA or DBA for $L_{X,Y}^i$.
The set  of states of $A_{X,Y}$ is the cartesian product of the sets of states of the $A_{X,Y}^i$, the transition function is as usual, and the accepting condition is \[\fin\left((S_1 \times Q_2 \times Q_3) \cup (Q_1 \times Q_2 \times S_3)\right) \wedge \inf(Q_1 \times S_2 \times Q_3)\]
\noindent where $Q_i$ is the set of states of $A_{X,Y}^i$, and $\fin(S_1)$, $\inf(S_2)$, $\fin(S_3)$ are the accepting conditions of $A_{X,Y}^1$, $A_{X,Y}^2$, and $A_{X,Y}^3$.

We construct a DRA $A_\varphi$ for $L(\varphi)$. Since $X \subseteq \sfmu(\varphi)$ and $Y \subseteq \sfnu(\varphi)$, there are at most $2^n$ pairs of sets $X, Y$.  Let $A_1, \ldots, A_k$ be an enumeration of the DRAs for these pairs, where $k \leq 2^n$, and let $Q_i$ and $\alpha_i~=~\fin(U_i) \wedge \inf(V_i)$ be the set of states and accepting condition of $A_i$, repectively. The set of states of $A_\varphi$ is $Q_1 \times \cdots \times Q_k$, the transition function is as usual, and the accepting condition is
\[\begin{array}{lll}
\bigvee_{i=1}^k &  \fin(Q_1 \times \cdots \times Q_{i-1} \times U_i \times Q_{i+1} \times \cdots \times Q_k)  & \wedge \\
& \inf(Q_1 \times \cdots \times Q_{i-1} \times V_i \times Q_{i+1} \times \cdots \times Q_k)
\end{array}\]
So $A_\varphi$ has $\left(2^{2^{O(n)}}\right)^{2^n} = 2^{2^{O(n)} \cdot 2^n} = 2^{2^{O(n)}}$
\noindent states and at most $2^n$ Rabin pairs.

\section{Constructing NBAs for LTL Formulas}
\label{sec:ltl2nba}

Assume that $\varphi$ has length $n$. We use the Master Theorem to construct a NBA
for $L(\varphi)$ with $2^{O(n)}$ states.

We first describe how to construct NBAs for the LTL fragments of Section \ref{sec:fragments}. 
Let us start with some informal intuition. Consider the formula $\varphi = \G \X (a \vee b)$. In the DRA for $\varphi$ we find states for the formulas $\varphi$ and $\aft(\varphi, \emptyset)$ and a transition \[\varphi \trans{\emptyset} \aft(\varphi, \emptyset)\] where  $\aft(\varphi, \emptyset) \equiv_P  \varphi \wedge (a \vee  b)$.
\noindent The languages recognized from the states $\varphi$ and $\aft(\varphi, \emptyset)$ are precisely $L(\varphi)$ and $L(\aft(\varphi, \emptyset))$. The basic principle for the construction of the NBAs is to put $\aft(\varphi, \emptyset)$ in disjunctive normal form (DNF)
\[\varphi \wedge (a \vee  b) \equiv_P (\varphi \wedge a) \vee (\varphi \wedge b)\]
and instead of a single transition, have two transitions 
\[\varphi \trans{\emptyset} \varphi \wedge a \quad \text{ and } \quad \varphi \trans{\emptyset} \varphi \wedge b.\] In other words, the nondeterminism is used to guess which of the two disjuncts of the DNF is going to hold. Formally, we proceed as follows:

\begin{definition}
We define $\dnf(\varphi)$ as the set of clauses obtained by putting the propositional formula $\varphi$ in DNF, i.e., $\varphi \equiv_P \bigvee_{\psi \in \dnf(\varphi)} \psi$. Further let \[\DReach(\varphi) = \bigcup_{w \in {(2^{Ap})}^*} \daft(\psi, w) \] \noindent with $\daft(\psi, \epsilon) = \dnf(\psi)$, $\daft(\psi, \nu) = \dnf( \aft(\psi, \nu))$, and $\daft(\psi, \nu w) = \bigcup_{\psi' \in \daft(\psi, \nu)} \daft(\psi', w)$ for every formula $\psi$, letter $\nu$, and word $w$.
\end{definition}

\noindent Notice that  $\dnf(\false) = \emptyset$ and $\dnf(\true)= \{ \true \}$. Since the automata defined below have sets of states of the form $\DReach(\varphi)$, they have a state labeled by $\true$, but no state labeled by $\false$. 

The proof of the next proposition follows immediately from the definitions.
\begin{proposition}\label{prop:simplenbas}
Let $\varphi \in \ltlfum$. 
\begin{itemize}
\item The following NBA over the alphabet $2^{Ap}$ recognizes $L(\varphi)$:
\[\mathcal{A}_\mu^\varphi= (\, \DReach(\varphi), \daft, \dnf(\varphi), \inf(\true) \, )\]
\item The following NBA over the alphabet $2^{Ap}$ recognizes $L(\G\F\varphi)$:
\[\mathcal{A}_{\G\F\mu}^\varphi= (\, \DReach(\F\varphi), \daft_{\F \varphi},\{\F \varphi\}, \inf(\true) \,)\]
\[\daft_{\F \varphi}(\psi, \nu) = \begin{cases} \{\F \varphi\} & \text{if } \psi \equiv_P \true \\ \daft(\psi, \nu) & \text{otherwise.} \end{cases}\]
\end{itemize}

Let $\varphi \in \ltlgrw$. 
\begin{itemize}
\item The following NBA over the alphabet $2^{Ap}$ recognizes $L(\varphi)$:
\[\mathcal{A}_\nu^\varphi= (\, \DReach(\varphi), \daft, \dnf(\varphi), \inf(\DReach(\varphi)) \, )\]
\item The following NBA over the alphabet $2^{Ap}$ recognizes $L(\F\G\varphi)$:
\[\mathcal{A}_{\F\G\nu}^\varphi= (\, \DReach(\G\varphi) \cup \{\F\G\varphi\}, \daft_{\G \varphi}, \{\F \G \varphi\}, \inf(\DReach(\G\varphi)) \,)\]
\[\daft_{\G \varphi}(\psi, \nu) = \begin{cases} \{\F\G\varphi, \G \varphi\} & \text{if } \psi = \F\G\varphi \\ \daft(\psi, \nu) & \text{otherwise.} \end{cases}\]
\end{itemize}
\end{proposition}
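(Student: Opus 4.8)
The plan is to prove Proposition~\ref{prop:simplenbas} by leveraging the characterizations of Lemma~\ref{thm:logical:fragments} together with the fundamental properties of $\aft$ and $\daft$. The central observation connecting the deterministic constructions of Proposition~\ref{prop:simpleaut} to these nondeterministic ones is that $\daft$ is simply the result of expanding $\aft$ into disjunctive normal form and then splitting each disjunct into a separate successor. The first thing I would establish is the bridge lemma relating the two: a run of the NBA $\mathcal{A}_\mu^\varphi$ reading $w_{0i}$ reaches a clause $\psi \in \daft(\dnf(\varphi), w_{0i})$ if{}f $\psi$ is one of the disjuncts of $\aft(\varphi, w_{0i})$ when put in DNF, so that $w_i \models \psi$ for some reachable clause $\psi$ if{}f $w_i \models \aft(\varphi, w_{0i})$. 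Combined with Lemma~\ref{lem:af:prop}(1), this gives $w \models \varphi$ if{}f $w_i \models \aft(\varphi, w_{0i})$ for all $i$.

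For the $\ltlfum$ case of $\mathcal{A}_\mu^\varphi$, the acceptance condition is $\inf(\true)$, meaning a run is accepting if{}f it visits the state $\true$ infinitely often; since $\true$ is a sink (all its successors under $\daft$ are again $\{\true\}$), visiting it once suffices. I would argue: if $w \models \varphi$, then by Lemma~\ref{thm:logical:fragments} there is an $i$ with $\aft(\varphi, w_{0i}) \equiv_P \true$, so some clause in the DNF is $\true$ and the nondeterministic run can guess the disjunct leading to the $\true$ sink, yielding acceptance. Conversely, a run reaching $\true$ witnesses a DNF clause $\psi$ with $w_i \models \psi$, hence $w_i \models \aft(\varphi, w_{0i})$, hence $w \models \varphi$. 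For $\mathcal{A}_{\G\F\mu}^\varphi$, the modified transition $\daft_{\F\varphi}$ resets to $\{\F\varphi\}$ whenever a clause becomes $\true$, so that infinitely many visits to $\true$ correspond exactly to infinitely many ``successful restarts,'' matching the $\forall i. \exists j.\; \aft(\F\varphi, w_{ij}) \equiv_P \true$ characterization of $\G\F\varphi$.

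The $\ltlgrw$ cases are dual but require more care, and this is where I expect the main obstacle. The acceptance condition $\inf(\DReach(\varphi))$ declares a run accepting if{}f it visits the state set infinitely often — which, since there is no $\false$ state in $\DReach(\varphi)$ (because $\dnf(\false) = \emptyset$), holds exactly when the run never dies, i.e.\ when the nondeterministically chosen sequence of clauses stays defined forever. By Lemma~\ref{thm:logical:fragments}, $w \models \varphi$ if{}f $\aft(\varphi, w_{0i}) \not\equiv_P \false$ for all $i$, so the delicate point is to show that one can \emph{consistently} choose, at each step, a disjunct $\psi_{i+1} \in \daft(\psi_i, w[i])$ that itself never collapses to $\false$ — i.e.\ that there exists an infinite surviving branch in the tree of DNF-clause choices. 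I would handle this with a König's-lemma argument: the set of DNF clauses reachable from $\varphi$ is finite (Lemma~\ref{lem:af:prop}(3)), the tree of clause-choices reading $w$ is finitely branching, and if $\aft(\varphi, w_{0i})$ is nonempty (not $\false$) for every $i$ then the tree has infinitely many nodes, so by König's lemma it has an infinite branch, which is precisely an accepting run.

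For $\mathcal{A}_{\F\G\nu}^\varphi$, the construction adds a distinguished initial state $\F\G\varphi$ that loops on itself (guessing the stabilization point has not yet arrived) and may nondeterministically drop into $\dnf(\G\varphi)$; once in $\DReach(\G\varphi)$, acceptance again reduces to never dying. The argument is that $w \models \F\G\varphi$ if{}f some suffix $w_i \models \G\varphi$, if{}f the automaton can loop at $\F\G\varphi$ for $i$ steps and then launch a surviving $\G\varphi$-computation, for which I reuse the König's-lemma reasoning of the previous paragraph applied to the suffix. The remaining routine checks — that $\daft$ is well-defined, that the state sets are finite, and that $\dnf$ behaves correctly on $\true$ and $\false$ — follow directly from the definitions and from Lemma~\ref{lem:af:prop}, so I would state them briefly and refer to the $\daft/\aft$ correspondence rather than grind through each operator.
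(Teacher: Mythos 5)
Your proof is correct and matches the intended argument: the paper offers no proof of Proposition~\ref{prop:simplenbas} beyond the remark that it ``follows immediately from the definitions,'' and your combination of the characterizations in Lemma~\ref{thm:logical:fragments} with the clause-level correspondence between $\daft$ and $\aft$ (together with the observation that the $\nu$-automata accept if{}f some run survives forever, there being no $\false$ state) is precisely the natural filling-in. One cosmetic simplification: the K\"onig's-lemma step can be replaced by a direct invariant run construction---choose $\psi_0 \in \dnf(\varphi)$ with $w \models \psi_0$ and, inductively, $\psi_{i+1} \in \daft(\psi_i, w[i])$ with $w_{i+1} \models \psi_{i+1}$, which exists by Lemma~\ref{lem:af:prop}(1) because $\bigvee_{\psi' \in \daft(\psi_i, w[i])} \psi' \equiv_P \aft(\psi_i, w[i])$---but your compactness argument is equally sound.
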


Recall that the elements of $\Reach(\varphi)$ are positive boolean combinations of proper
subformulas of $\varphi$. It follows that the elements of $\DReach(\varphi)$ are \emph{conjunctions} of proper subformulas of $\varphi$. Since the number of proper subformulas is bounded by the length of the formula, we immediately obtain:

\begin{proposition}
If $\varphi$ has $n$ proper subformulas, then $\DReach(\varphi)$ has at most $2^n$ elements, and so all the NBAs of
Proposition \ref{prop:simplenbas}  have at most $2^{n+1}+1 = O(2^n)$ states.
\end{proposition}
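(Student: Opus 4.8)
The plan is to first pin down the syntactic shape of the elements of $\DReach(\varphi)$ and then count them. I would show that every element of $\DReach(\varphi)$ is, up to $\equiv_P$, a conjunction of proper subformulas of $\varphi$. The starting point is Lemma \ref{lem:af:prop}(2): for any formula $\psi$ and finite word $w$, the formula $\aft(\psi,w)$ is a \emph{positive} boolean combination of proper subformulas of $\psi$. Since no proper subformula occurs negated, putting $\aft(\psi,\nu)$ into DNF yields clauses that are conjunctions of proper subformulas (and never contain a negated proper subformula as a literal). Thus each clause in $\dnf(\aft(\psi,\nu)) = \daft(\psi,\nu)$ is a conjunction of proper subformulas of $\psi$.

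To propagate this through the recursion $\daft(\psi,\nu w) = \bigcup_{\psi' \in \daft(\psi,\nu)} \daft(\psi',w)$, I would argue by induction on $|w|$ with the invariant that every clause in $\daft(\varphi,w)$ is a conjunction of proper subformulas of $\varphi$. The point requiring care is that each intermediate clause $\psi'$ is itself such a conjunction, so that when $\aft$ is applied to $\psi'$ the proper subformulas appearing are again proper subformulas of $\varphi$: indeed, the proper subformulas of any proper subformula of $\varphi$ are themselves proper subformulas of $\varphi$, so the pool of available conjuncts is closed under the construction. Hence the invariant survives each step. (Equivalently, one checks by the same induction that $\bigvee_{c \in \daft(\varphi,w)} c \equiv_P \aft(\varphi,w)$, identifying the clauses as precisely the DNF clauses of $\aft(\varphi,w)$.)

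For the counting I would work up to $\equiv_P$, which is how $\DReach$ is read. By idempotence, commutativity and associativity of $\wedge$, a conjunction of proper subformulas is completely determined by the \emph{set} of proper subformulas it contains (the empty set giving $\true$, which is exactly why $\true$ but not $\false$ labels a state). With $n$ proper subformulas there are only $2^n$ such subsets, so $|\DReach(\varphi)| \leq 2^n$. Applying this to the four automata: $\mathcal{A}_\mu^\varphi$ and $\mathcal{A}_\nu^\varphi$ have state set $\DReach(\varphi)$, hence at most $2^n$ states; $\mathcal{A}_{\G\F\mu}^\varphi$ has state set $\DReach(\F\varphi)$, and since $\F\varphi$ has $n+1$ proper subformulas (those of $\varphi$ together with $\F\varphi$ itself) this is at most $2^{n+1}$; dually $\mathcal{A}_{\F\G\nu}^\varphi$ has state set $\DReach(\G\varphi) \cup \{\F\G\varphi\}$, bounded by $2^{n+1}+1$. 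Taking the maximum over the four cases gives $2^{n+1}+1 = O(2^n)$.

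I expect the main obstacle to be the propagation step: making sure the repeated $\dnf$/$\aft$ alternation inside $\daft$ cannot escape the class of conjunctions of proper subformulas of the original $\varphi$. This rests on two ingredients working together, namely the positivity guaranteed by Lemma \ref{lem:af:prop}(2), which keeps DNF clauses free of negated proper subformulas, and the closure of the set of proper subformulas of $\varphi$ under taking proper subformulas, which keeps all conjuncts inside one fixed finite pool. Everything else is routine counting.
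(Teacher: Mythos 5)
Your proof is correct and takes essentially the same route as the paper: the paper likewise derives the bound from Lemma~\ref{lem:af:prop}(2), observing that since elements of $\Reach(\varphi)$ are positive boolean combinations of proper subformulas, the elements of $\DReach(\varphi)$ are conjunctions of proper subformulas, of which there are at most $2^n$ up to $\equiv_P$. Your version merely makes explicit the induction that propagates this invariant through the recursion defining $\daft$ and the per-automaton case analysis yielding $2^{n+1}+1$, both of which the paper leaves implicit in its ``follows immediately from the definitions'' treatment.
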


\begin{example}\label{ex:GF:NBA}
Let $\varphi = a \wedge \X (b \vee \F c)$, the formula for which a DBA was given in Example \ref{ex:GF}. The NBA $\mathcal{A}_{\G\F\mu}^\varphi$ is shown below. The figure uses the abbreviations of Example \ref{ex:GF}.
	
\begin{center}
	\begin{tikzpicture}[x=3cm,y=1.5cm,font=\footnotesize,initial text=,outer sep=2pt]
	\tikzstyle{acc}=[double]

	\node[state,initial] (1) at (0,0) {$\F \varphi$};
	\node[state]         (2) at (1,0) {$\F c$};
	\node[state]         (3) at (1,-0.75) {$b$};
	\node[state,acc] 	 (4) at (2,0) {$\true$};
	
	\path[->] 
	(1) edge node[above]{$\alpha$} (2)
	(1) edge[loop above] node[above]{$2^{Ap}$} (1)
	(1) edge[bend right= 15] node[above]{$\alpha$} (3)
	(2) edge[loop above] node[above]{$2^{Ap}$} (2)
	(2) edge node[above]{$\gamma$} (4)
	(3) edge[bend right= 15] node[above]{$\beta$} (4)
	(4) edge[bend left=15, out=-45, in=225] node[above]{$2^{Ap}$} (1);
	\end{tikzpicture}
\end{center}
\noindent Compared to the DBA of Example \ref{ex:GF}, the NBA has a simpler structure, although in this case the same number of states. \qee
\end{example}

To define NBAs for arbitrary formulas we apply the Master Theorem. This is routine, and so we only sketch the constructions.

\paragraph{A NBA for $L_{X,Y}^1$.}
We define a NBA $\mathcal{C}_{\varphi,X}$ that accepts a word $w$ if{}f $w_{i} \models \evalnu{\aft(\varphi, w_{0i})}{X}$ for some suffix $w_i$ of $w$. Recall that $\evalnu{\aft(\varphi, w_{0i})}{X} \in \ltlgrw$ for every $i \geq 0$.
The automaton consists of two components with sets of states $Q_1$ and $Q_2$ given by 
\[Q_1 = \{ (\psi, 1) \mid \psi \in \DReach(\varphi) \} \quad  Q_2 = \{ (\evalnu{\psi}{X}, 2) \mid \psi \in \DReach(\varphi) \} \] \noindent 
Transitions either stay in the same component, or ``jump'' from the first component to the second. Transitions that stay in the same component are of the form $(\psi, i) \trans{\nu} (\psi', i)$ for $\psi' \in \daft(\psi, \nu)$ and $i=1,2$. ``Jumps'' are transitions of the form $(\psi, 1) \trans{\epsilon} (\evalnu{\psi}{X}, 2)$. Jumping amounts to nondeterministically guessing the suffix $w_i$ satisfying $\evalnu{\aft(\varphi, w_{0i})}{X}$. The accepting condition is $\inf(Q_2)$. Notice that the state $(\false, 2)$ does not have any successors.

\noindent Since $\DReach(\varphi)$ has at most $2^n$ states, $\mathcal{C}_{\varphi,X}$ has $2^{O(n)}$ states.

\paragraph{A NBA for $L_{X,Y}^2$.} As in the case of DRAs, we define a NBA recognizing 
$L\left(\bigwedge_{\psi \in X} \G\F(\evalmu{\psi}{Y})\right)$. To obtain an NBA with $2^{O(n)}$
states  we use a well-known trick.  Given a set $\{ \psi_1, \ldots, \psi_k \}$ of formulas,
we have 
\[\bigwedge_{i=1}^k \G\F \psi_i \equiv \G\F( \psi_1 \wedge \F (\psi_2 \wedge \F( \psi_3 \wedge \ldots \wedge \F (\psi_{k-1} \wedge \F \psi_k)\ldots))\]
The formula obtained after applying the trick belongs to $\G\F(\ltlfum)$ and has $O(n)$ $\mu$-subformulas. By Proposition \ref{prop:simplenbas}.2 we can construct a NBA for it with $2^{O(n)}$ states.

\paragraph{A NBA for $L_{X,Y}^3$.} In this case we apply
\[\bigwedge_{i=1}^k \F\G \psi_i \equiv \F\G \left( \bigwedge_{i=1}^k \psi_i \right)\]
and Proposition \ref{prop:simplenbas}.4, yielding an automaton with $2^{O(n)}$ states.

\paragraph{A NBA for $L(\varphi)$.} We proceed as in the case of DRAs, using the well-known operations for union and intersection of NBAs.  The NBA $A_\varphi$ is the union of at most $2^n$ NBAs $A_{X,Y}$, each of them 
with $2^{O(n)}$ states. The difference with the DRA case is that, given NBAs with $n_1, \ldots, n_k$ states accepting languages $L_1, \ldots, L_k$, we can construct a NBA for $\bigcup_{i=1}^k L_i$ with $\sum_{i=1}^k n_i$ states, instead of 
$\prod_{i=1}^k n_i$ states, as was the case for DRAs.  So $A_\varphi$ has $2^n \cdot 2^{O(n)} = 2^{O(n)}$ states.

\section{Constructing LDBAs for LTL Formulas}
\label{sec:ltl2ldba}

The translation of LTL into LDBA combines the translations into DRA and NBA. Recall that the states of an LDBA are partitioned into an initial component and a deterministic accepting component containing all accepting states. While in the definition of a LDBA the initial component can be nondeterministic, in our construction we can easily make it deterministic: Every accepting run has exactly one non-deterministic step.  This makes the LDBA usable for quantitative (and not only qualitative) probabilistic model checking, as described in \cite{DBLP:conf/cav/SickertEJK16}.

Lemma \ref{lem:delay} shows that checking property (1) of Theorem \ref{thm:logical:char} can be arbitrarily delayed, which allows us to slightly rephrase the Master Theorem as follows:

\begin{theorem}(Variant of the Master Theorem)
For every formula $\varphi$ and for every word $w$: \, $w \models \varphi$ if{}f there exists $X \subseteq \sfmu(\varphi)$, $Y \subseteq \sfnu(\varphi)$, and $i \geq 0$ satisfying
\[\begin{array}{ll}
(1')  & \quad w_i \models \evalnu{\aft(\varphi, w_{0i})}{X} \\
(2')  & \quad \forall \psi \in X. ~ w_i \models \G\F(\evalmu{\psi}{Y})  \\
(3')  & \quad \forall \psi \in Y. ~ w_i \models \G(\evalnu{\psi}{X})	
\end{array}\]
\end{theorem}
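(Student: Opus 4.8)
The plan is to derive this variant directly from the already-proved Master Theorem (Theorem~\ref{thm:logical:char}) rather than re-proving it from scratch, exploiting two elementary facts about the shapes of the formulas involved. First, $\G\F\chi$ is \emph{suffix-invariant}: by the semantics, $w \models \G\F\chi$ holds iff $\chi$ is satisfied at infinitely many positions, so $w \models \G\F\chi$ iff $w_i \models \G\F\chi$ for every $i$. Second, $w \models \F\G\chi$ iff there is an index $j$ with $w_j \models \G\chi$, and once $w_j \models \G\chi$ holds it persists for every larger index. These two observations align conditions (2)/(3) of the Master Theorem with the shifted conditions (2$'$)/(3$'$); the only remaining issue is to make the single index $i$ of the variant serve simultaneously as the witness for (1$'$), (2$'$), and (3$'$), which is exactly where Lemma~\ref{lem:delay} enters.

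For the direction ($\Leftarrow$), I would assume (1$'$)--(3$'$) for some $X$, $Y$, and $i$, and verify that $X$ and $Y$ satisfy the three conditions of the Master Theorem for $w$. Condition (1) holds with witness $i$, taken verbatim from (1$'$). For (2), condition (2$'$) gives $w_i \models \G\F(\evalmu{\psi}{Y})$ for each $\psi \in X$, and suffix-invariance of $\G\F$ upgrades this to $w \models \G\F(\evalmu{\psi}{Y})$. For (3), condition (3$'$) gives $w_i \models \G(\evalnu{\psi}{X})$ for each $\psi \in Y$, so $i$ witnesses $w \models \F\G(\evalnu{\psi}{X})$. Theorem~\ref{thm:logical:char} then yields $w \models \varphi$.

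For the direction ($\Rightarrow$), I would start from $w \models \varphi$ and invoke Theorem~\ref{thm:logical:char} to obtain sets $X$, $Y$ (concretely $X = \setGF_w$, $Y = \setFG_w$) meeting (1)--(3); I keep the same $X$, $Y$ and only have to produce a uniform index $i$. From (3), each $\psi \in Y$ has some index $j_\psi$ with $w_{j_\psi} \models \G(\evalnu{\psi}{X})$; since $Y$ is finite and $\G$-formulas persist under further shifting, $i_3 := \max_{\psi \in Y} j_\psi$ satisfies (3$'$) for all of $Y$ at once. From (1) there is a witness $j_1$ with $w_{j_1} \models \evalnu{\aft(\varphi, w_{0 j_1})}{X}$; applying Lemma~\ref{lem:delay} to the formula $\aft(\varphi, w_{0 j_1})$ and the suffix $w_{j_1}$, and unfolding $\aft(\aft(\varphi, w_{0 j_1}), (w_{j_1})_{0k}) = \aft(\varphi, w_{0(j_1 + k)})$, pushes this witness forward to every index $j \geq j_1$. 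Taking $i := \max(i_3, j_1)$ then makes (1$'$) and (3$'$) hold, while (2$'$) holds automatically by suffix-invariance of $\G\F$ applied to (2).

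The routine parts are the two semantic facts and the index bookkeeping; the one step that needs care will be the forward application of Lemma~\ref{lem:delay}, where I must unfold the generalized-word definition of $\aft$ (Definition~\ref{def:af}) to see that delaying the check of condition (1) from $j_1$ to $j_1 + k$ re-expresses it as the variant's condition (1$'$) at index $j_1+k$. Everything else reduces cleanly to Theorem~\ref{thm:logical:char}.
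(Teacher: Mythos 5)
Your proposal is correct and takes essentially the same route as the paper's own proof: the ($\Leftarrow$) direction is immediate from suffix-invariance of $\G\F$ and the fact that $w_i \models \G\chi$ witnesses $w \models \F\G\chi$, while the ($\Rightarrow$) direction keeps $X$, $Y$ from Theorem~\ref{thm:logical:char}, uses Lemma~\ref{lem:delay} to push the witness of condition (1) arbitrarily far forward, and picks $i$ past the stabilization point of the $\F\G$-formulas of condition (3). Your write-up merely spells out the index bookkeeping and the unfolding $\aft(\aft(\varphi, w_{0j_1}), (w_{j_1})_{0k}) = \aft(\varphi, w_{0(j_1+k)})$ that the paper compresses into ``the index $i$ of condition (1) can be chosen arbitrarily large.''
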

\begin{proof}
Clearly, the existence of an index $i$ satisfying (1'-3') implies that conditions (1-3) hold. For the other direction, assume conditions (1-3) hold. By Lemma \ref{lem:delay} the index $i$ of condition (1) 
can be chosen arbitrarily large. Since $w \models \bigwedge_{\psi \in X} \F\G(\evalnu{\psi}{X})$, we can choose $i$ so that it also satisfies $w_i \models \bigwedge_{\psi \in X} \G(\evalnu{\psi}{X})$. 
\end{proof}

The idea of the construction is to use the initial component to keep track of $\aft(\varphi, w_{0i})$---that is, after reading
a finite word $w_{0i}$ the initial component is in state $\aft(\varphi, w_{0i})$---and use the jump to the accepting component to guess sets $X$ and $Y$ and the stabilization point $i$. The jump leads to the initial state of the intersection of three DBAs, which are in charge of checking $(1')$, $(2')$, and $(3')$.

Recall that $\aft(\varphi, w_{0i}) \in \Reach(\varphi)$ for every word $w$ and every $i \geq 0$. 
For every $\psi \in \Reach(\varphi)$ and for each pair of sets $X, Y$  we construct a DBA $\mathcal{D}_{\psi,X,Y}$ recognizing the intersection of the languages of the formulas:
\begin{center}
$\evalnu{\psi}{X} \qquad \bigwedge_{\psi \in X}\G\F(\evalmu{\psi}{Y}) \qquad \bigwedge_{\psi \in Y}\G(\evalnu{\psi}{X})$	
\end{center}
\noindent These formulas belong to $\ltlgrw$, $\G\F(\ltlfum)$, and $\ltlgrw$, respectively, and so we can obtain DBAs for them following the recipes of  Proposition \ref{prop:simpleaut}. As argued before, each of these DBAs have $2^{2^{O(n)}}$ states,
and so we can also construct a DBA for their intersection with the same upper bound. Summarizing, we obtain:

\paragraph{Initial component.}
The component is $(\Reach(\varphi), \aft, \{\varphi\})$ and thus the component has at most $2^{2^n}$ states. Recall that this component does not have accepting states.

\paragraph{Accepting component.} The component is the disjoint union, for every $\psi \in \Reach(\varphi)$, 
$X \subseteq \sfmu(\varphi)$, and $Y \subseteq \sfnu(\varphi)$, of the DBA $\mathcal{D}_{\psi,X,Y}$. Since $\Reach(\varphi)$ has at most $2^{2^n}$ formulas and there are at most $2^n$ pairs $(X, Y)$, the component is the disjoint union of at most $2^{2^n} \cdot 2^n$ automata, each of then with $2^{2^{O(n)}}$ states. Thus in total $2^{2^{O(n)}}$ states.

\paragraph{A LDBA for $L(\varphi)$.} The LDBA is the disjoint union of the initial and accepting components.
The initial component is connected to the accepting component by $\epsilon$-transitions: For every formula $\psi \in \Reach(\varphi)$ and for every two sets $X, Y$, there is an $\epsilon$-transition from state $\psi$ of the initial component to the initial state of 
$\mathcal{D}_{\psi,X,Y}$. 

The LDBA has $2^{2^{O(n)}} + 2^{2^n} = 2^{2^{O(n)}}$ states. Recall that the lower bound for the blowup of a translation of LTL to LDBA is also doubly exponential (see e.g. \cite{DBLP:conf/cav/SickertEJK16}).

\section{Discussion}
\label{sec:discussion}

This paper builds upon our own work \cite{DBLP:conf/cav/KretinskyE12,DBLP:conf/atva/GaiserKE12,DBLP:conf/atva/KretinskyL13,DBLP:conf/cav/EsparzaK14,DBLP:journals/fmsd/EsparzaKS16,DBLP:conf/cav/SickertEJK16}. In particular, the 
notion of  \emph{stabilization point} of a word with respect to a formula, and the idea of using oracle information that is subsequently checked are already present there. 
The translations of LTL to LDBAs of \cite{DBLP:conf/tacas/KiniV15,DBLP:conf/tacas/Kini017} are based on similar ideas, also with resemblance to obligation sets of \cite{DBLP:conf/birthday/LiP0WHL13,DBLP:journals/fac/LiZZPVH18}.

The essential novelty of this paper with respect to the previous work is the
introduction of the symmetric mappings $\evalmu{\cdot}{\cdot}$ and $\evalnu{\cdot}{\cdot}$. Applying them to an arbitrary formula $\varphi$ yields a simpler formula, but \emph{not in the sense one might expect}. In particular,  $\evalmu{\varphi}{Y}$ may be \emph{stronger} than $\varphi$. For example, the information that, say, the formula  $a\W b$ does not hold infinitely often makes us check the \emph{stronger} formula $a \U b = \evalmu{(a \W b)}{\emptyset}$. However, exactly this point makes the ``$\mu$-$\nu$-alternation'' work: The formulas $\evalnu{\varphi}{X}$ and $\evalmu{\varphi}{Y}$ are only simpler in the sense of \emph{easier to translate}.  
This is the reason why operators $\W$ and $\M$ are present in the core syntax and the missing piece since the symmetric solutions \cite{DBLP:conf/cav/KretinskyE12,DBLP:conf/atva/KretinskyL13}, limited to fragments based on the simpler operators $\F$ and~$\G$.

The Master Theorem can be applied beyond what is described in this paper. In order to translate LTL into universal automata we only need to normalize formulas into conjunctive normal form. Furthermore one can obtain a double exponential translation into deterministic parity automata adapting the  approach described in \cite{DBLP:conf/tacas/EsparzaKRS17}.
Another intriguing question is whether our translation into NBA, which is very different from the ones described in the literature is of advantage in some application like runtime verification. 

The target automata classes used in practice typically use an acceptance condition defined on transitions, instead of states. Further, they use \emph{generalized} acceptance conditions, be it B\"uchi or Rabin. All our constructions can be restated effortlessly to yield automata with transition-based acceptance, and if generalized acceptance conditions are allowed then they become simpler and more succinct. The implementation used in our experiments actually uses these two features, which is described in the appendix of \cite{arXiv}.

To conclude, in our opinion this paper successfully finishes the journey started in \cite{DBLP:conf/cav/KretinskyE12}. Via a single theorem it provides an arguably elegant (unified, symmetric, syntax-independent, not overly complex) and efficient (asymptotically optimal and practically relevant) translation of LTL into your favourite $\omega$-automata.

\paragraph{Acknowledgments.} The authors want to thank Alexandre Duret-Lutz, Benedikt Seidl, the anonymous reviewers, and the participants of the 2nd Jerusalem Winter School in Computer Science and Engineering on \enquote{Formal Verification} for their helpful comments and remarks.

\bibliography{ref}

\newpage
\appendix

\section{Definitions and Proofs}

\subsection{Properties of $\aft$}

For convenience we restate the proof for the following existing result:

\lemAf*
\begin{proof}
(1) We show by induction on $\varphi$ that for a single letter $\nu \in 2^{Ap}$ the property $\nu w' \models \varphi \leftrightarrow w' \models \aft(\varphi, \nu)$ holds, where we just show two representative cases of the induction. The result for arbitrary $w$ is then proven by induction on the length of $w$. Let us now proceed with proving the single-letter case: \[\nu w' \models \varphi ~~ \textit{if{}f} ~~ w' \models \aft(\varphi, \nu)\]

\noindent Case $\varphi = a$. \[\nu w' \models a \leftrightarrow a \in \nu \leftrightarrow \aft(a, \nu) = \true \leftrightarrow \nu w \models \aft(a, \nu)\]

\noindent Case $\varphi = \psi_1 \U \psi_2$. 
\[\begin{array}{clr}
			    & \nu w' \models \varphi \\
\leftrightarrow & \nu w' \models \psi_2 \vee (\psi_1 \wedge \X \varphi) & \text{(LTL expansion)} \\
\leftrightarrow & w' \models \aft(\psi_2, \nu) \vee (\aft(\psi_1, \nu) \wedge \varphi) & \text{(LTL semantics and I.H.)} \\
\leftrightarrow & w' \models \aft(\varphi, \nu) \\
\end{array}\]

\noindent (2-3) Intuitively this holds, since $\aft$ does not create new temporal operators and maps only to Boolean combinations of existing temporal subformulas. Formally the proof proceeds by induction on $\varphi$ and then length of $w$. Thus each $\psi \in \Reach(\varphi)$ can be identified as a function over $n$ variables. Since there are most $2^{2^n}$ Boolean functions over the domain $\mathbb{B}^n$ the size of $\Reach(\varphi)$ is at most $2^{2^n}$. 
\end{proof}

\subsection{Definitions for $\sfmu(\varphi)$ and $\sfnu(\varphi)$}

\begin{definition}\label{def:sfmu}
Let $\varphi$ be a formula. The set $\sfmu(\varphi)$ is inductively defined as follows:
\begin{itemize}
\item If $\varphi = \true, \false, a, \neg a$, then $\sfmu(\varphi) = \emptyset$.
\item If $\varphi = \mathit{op}(\psi)$ for $\mathit{op} \in \{ \X, \G \}$ then $\sfmu(\varphi) = \sfmu(\psi)$.
\item If $\varphi = \mathit{op}(\psi_1, \psi_2)$ for $\mathit{op} \in \{ \wedge, \vee, \W, \R \}$ then \\ $\sfmu(\varphi) = \sfmu(\psi_1) \cup \sfmu(\psi_2)$.
\item If $\varphi = \F \psi$ then $\sfmu(\varphi) = \{\F \psi\} \cup \sfmu(\psi)$.
\item If $\varphi = \mathit{op}(\psi_1, \psi_2)$ for $\mathit{op} \in \{ \U, \M \}$ then \\ $\sfmu(\varphi) = \{\mathit{op}(\psi_1, \psi_2)\} \cup \sfmu(\psi_1) \cup \sfmu(\psi_2)$.
\end{itemize}
\end{definition}

\begin{definition}\label{def:sfnu}
Let $\varphi$ be a formula. The set $\sfnu(\varphi)$ is inductively defined as follows:
\begin{itemize}
\item If $\varphi = \true, \false, a, \neg a$, then $\sfnu(\varphi) = \emptyset$.
\item If $\varphi = \mathit{op}(\psi)$ for $\mathit{op} \in \{ \X, \F \}$ then $\sfnu(\varphi) = \sfnu(\psi)$.
\item If $\varphi = \mathit{op}(\psi_1, \psi_2)$ for $\mathit{op} \in \{ \wedge, \vee, \U, \M \}$ then \\ \quad $\sfnu(\varphi) = \sfnu(\psi_1) \cup \sfmu(\psi_2)$.
\item If $\varphi = \G \psi$ then $\sfnu(\varphi) = \{\G \psi\} \cup \sfnu(\psi)$.
\item If $\varphi = \mathit{op}(\psi_1, \psi_2)$ for $\mathit{op} \in \{ \W, \R \}$ then \\ \quad $\sfnu(\varphi) = \{\mathit{op}(\psi_1, \psi_2)\} \cup \sfnu(\psi_1) \cup \sfnu(\psi_2)$.
\end{itemize}
\end{definition}

\subsection{Properties of $\evalnu{\varphi}{X}$ and $\evalmu{\varphi}{Y}$}

\lemEvalnuEvalmu*
\begin{proof}
\noindent (a1) Assume $~\setF_w \subseteq X$. Then $\setF_{w_i} \subseteq X$ for all $i \geq 0$. We prove the following stronger statement via structural induction on $\varphi$:
\[\forall i. ~ (\; (w_i \models \varphi) \rightarrow (w_i \models \evalnu{\varphi}{X}) \; )\]

\noindent Case $\varphi = \psi_1 \U \psi_2$: Let $i \geq 0$ arbitrary and assume $w_i \models \psi_1 \U \psi_2$. Then $\psi_1 \U \psi_2 \in \setF_{w_i}$ and so $\psi_1 \U \psi_2 \in X$. We prove $w_i \models \evalnu{(\psi_1 \U \psi_2)}{X}$: 
\[
\arraycolsep=1.8pt
\begin{array}{clr}
 	        & w_i \models \psi_1 \U \psi_2 \\
\implies  & w_i \models \psi_1 \W \psi_2 \\ 
\implies  & \forall j. ~ w_{i+j} \models \psi_1 \vee \exists k \leq j.~ w_{i+k} \models \psi_2 \\
\implies  & \forall j. ~ w_{i+j} \models \evalnu{\psi_1}{X} \vee \exists k \leq j.~ w_{i+k} \models \evalnu{\psi_2}{X} & \text{(I.H.)} \\
\implies  & w_i \models (\evalnu{\psi_1}{X}) \W (\evalnu{\psi_2}{X}) \\
\implies  & w_i \models \evalnu{(\psi_1 \U \psi_2)}{X} & \text{(Def. $\evalnu{\varphi}{X}$)}
\end{array}
\]

\noindent Case $\varphi = \psi_1 \vee \psi_2$: Let $i \geq 0$ arbitrary and assume $w_i \models \psi_1 \vee \psi_2$. We have:
\[\begin{array}{clr}
	        & w_i \models \psi_1 \vee \psi_2  \\
\implies \; & (w_i \models \psi_1) \vee (w_i \models \psi_2) \\
\implies \; & (w_i \models \evalnu{\psi_1}{X}) \vee (w_i \models \evalnu{\psi_2}{X}) &  \text{(I.H.)} \\
\implies \; & w_i \models \evalnu{(\psi_1 \vee \psi_2)}{X} & \text{(Def. $\evalnu{\varphi}{X}$)}
\end{array}\]

\medskip

\noindent (a2) Assume $X \subseteq \setGF_w$. Then $X \subseteq \setGF_{w_i}$ for all $i \geq 0$. We prove the following stronger statement via structural induction on $\varphi$:
\[\forall i. ~ (\; (w_i \models \evalnu{\varphi}{X}) \rightarrow (w_i \models \varphi) \; )\]

\noindent Case $\varphi = \psi_1 \U \psi_2$: If $\varphi \notin X$, then by definition $\evalnu{\varphi}{X} = \false$. So $w_i \not \models \evalnu{\varphi}{X} = \false$ for all $i$ and thus the implication $(w_i \models \evalnu{\varphi}{X}) \rightarrow (w_i \models \varphi)$ holds for every $i \geq 0$. Assume now $\varphi \in X$. Since $X \subseteq \setGF_w$ we have $w_i \models \G \F \varphi$ and so in particular $w_i \models \F \psi_2$. To prove the implication assume $w_i \models \evalnu{(\psi_1 \U \psi_2)}{X}$ for an arbitrary fixed $i$. We show $w_i \models \psi_1 \U \psi_2$: 
\[
\arraycolsep=1.8pt
\begin{array}{clr}
	        & w_i \models \evalnu{(\psi_1 \U \psi_2)}{X} \\
\implies \; & w_i \models (\evalnu{\psi_1}{X}) \W (\evalnu{\psi_2}{X}) & \text{(Def. $\evalnu{\varphi}{X}$)}\\
\implies \; & \forall j. ~ w_{i+j} \models \evalnu{\psi_1}{X} \vee \exists k \leq j.~ w_{i+k} \models \evalnu{\psi_2}{X} \\
\implies \; & \forall j. ~ w_{i+j} \models \psi_1 \vee \exists k \leq j.~ w_{i+k} \models \psi_2 & \text{(I.H.)} \\
\implies \; & w_i \models \psi_1 \W \psi_2 \\
\implies \; & w_i \models \psi_1 \U \psi_2 & (w_i \models \F \psi_2)
\end{array}\]

\noindent Case $\varphi = \psi_1 \vee \psi_2$: Let $i \geq 0$ arbitrary and assume $w_i \models \psi_1 \vee \psi_2$. We have:
\[
\begin{array}{clr}
	        & w_i \models \evalnu{(\psi_1 \vee \psi_2)}{X} \\
\implies \; & (w_i \models \evalnu{\psi_1}{X}) \vee (w_i \models \evalnu{\psi_2}{X}) & \text{(Def. $\evalnu{\varphi}{X}$)}\\
\implies \; & (w_i \models \psi_1) \vee (w_i \models \psi_2) & \text{(I.H.)} \\
\implies \; & w_i \models \psi_1 \vee \psi_2
\end{array}\]

\noindent (b1) Assume $~\setFG_w \subseteq Y$. Then $\setFG_{w_i} \subseteq Y$ for all $i$. We prove the following stronger statement via structural induction on $\varphi$:
\[\forall i. ~ (\; (w_i \models \varphi) \rightarrow (w_i \models \evalmu{\varphi}{Y})\;)\]

\noindent Case $\varphi = \psi_1 \W \psi_2$: Let $i \geq 0$ arbitrary and assume $w_i \models \varphi$. If $\varphi \in Y$ then  $\evalmu{\varphi}{Y} = \true$ and so $w_i \models \evalmu{\varphi}{Y} $ trivially holds. Assume now $\varphi \notin Y$. Since $\setFG_{w_i} \subseteq Y$ we have $w_i \not \models \F\G \varphi$ and so in particular $w_i \not \models \G\psi_1$. We prove $w_i \models \evalmu{(\psi_1 \W \psi_2)}{Y}$: 
\[
\arraycolsep=1.8pt
\begin{array}{clr}
 	        & w_i \models \psi_1 \W \psi_2 \\
\implies \; & w_i \models \psi_1 \U \psi_2 & (w_i \not \models \G \psi_1) \\ 
\implies \; & \exists j. ~ w_{i+j} \models \psi_2 \wedge \forall k < j.~ w_{i+k} \models \psi_1 \\
\implies \; & \exists j. ~ w_{i+j} \models \evalmu{\psi_2}{Y} \wedge \forall k < j.~ w_{i+k} \models \evalmu{\psi_1}{Y} & \text{(I.H.)} \\
\implies \; & w_i \models (\evalmu{\psi_1}{Y}) \U (\evalmu{\psi_2}{Y}) \\
\implies \; & w_i \models \evalmu{(\psi_1 \W \psi_2)}{Y} & \text{(Def. $\evalmu{\varphi}{Y}$)}
\end{array}\]

\noindent Case $\varphi = \psi_1 \vee \psi_2$: Let $i \geq 0$ arbitrary and assume $w_i \models \psi_1 \vee \psi_2$. We have:
\[\begin{array}{clr}
	        & w_i \models \psi_1 \vee \psi_2  \\
\implies \; &  (w_i \models \psi_1) \vee (w_i \models \psi_2) \\
\implies \; & (w_i \models \evalmu{\psi_1}{Y}) \vee (w_i \models \evalmu{\psi_2}{Y}) & \qquad \text{(I.H.)} \\
\implies \; & w_i \models \evalmu{(\psi_1 \vee \psi_2)}{Y} & \text{(Def. $\evalmu{\varphi}{Y}$)}\\
\end{array}\]

\noindent (b2) Assume $Y \subseteq \setG_w$. Then $Y \subseteq \setG_{w_i}$ for all $i$. We prove the following stronger statement via structural induction on $\varphi$:
\[\forall i. ~ (\;(w_i \models \evalmu{\varphi}{Y}) \rightarrow (w_i \models \varphi)\;)\]

\noindent Case $\varphi = \psi_1 \W \psi_2$: If $\varphi \in Y$, then since $Y \subseteq \setG_w$ we have $w_i \models \G \varphi$ and so $w_i \models \varphi$. Assume now that $\varphi \notin Y$ and $w_i \models \evalmu{(\psi_1 \W \psi_2)}{Y}$ for an arbitrary fixed $i$. We prove $w_i \models \psi_1 \W \psi_2$: 
\[
\arraycolsep=1.8pt
\begin{array}{clr}
	        & w_i \models \evalmu{(\psi_1 \W \psi_2)}{Y} \\
\implies \; & w_i \models (\evalmu{\psi_1}{Y}) \U (\evalmu{\psi_2}{Y}) & \text{(Def. $\evalmu{\varphi}{Y}$)}\\
\implies \; & \exists j. ~ w_{i+j} \models \evalmu{\psi_2}{Y} \wedge \forall k < j.~ w_{i+k} \models \evalmu{\psi_1}{Y} \\
\implies \; & \exists j. ~ w_{i+j} \models \psi_2 \wedge \forall k < j.~ w_{i+k} \models \psi_1 & \text{(I.H.)} \\
\implies \; & w_i \models \psi_1 \U \psi_2 \\
\implies \; & w_i \models \psi_1 \W \psi_2
\end{array}\]

\noindent Case $\varphi = \psi_1 \vee \psi_2$: We derive in a straightforward manner for an arbitrary and fixed $i$: 
\[\begin{array}{clr}
	        & w_i \models \evalmu{(\psi_1 \vee \psi_2)}{Y} \\
\implies \; & (w_i \models \evalmu{\psi_1}{Y}) \vee (w_i \models \evalmu{\psi_2}{Y}) & \text{(Def. $\evalmu{\varphi}{Y}$)}\\
\implies \; & (w_i \models \psi_1) \vee (w_i \models \psi_2) & \text{(I.H.)} \\
\implies \; & w_i \models \psi_1 \vee \psi_2 \\
\end{array}\]
\end{proof}

\end{document}